\newtheorem{remark}{Remark}
\newtheorem{definition}{Definition}
\newtheorem{example}{Example}
\newtheorem{theorem}{Theorem}
\renewcommand{\textvisiblespace}{\Box}
\newcommand{\loc}{\mathit{loc}}
\newcommand{\var}{\mathit{var}}
\newcommand{\val}{\mathit{val}}
\newcommand{\variables}{\mathit{Vars}}
\newcommand{\values}{\mathit{Values}}
\newcommand{\indexslot}{\mathit{index}}
\newcommand{\states}{Q}
\newcommand{\statevar}{\mathit{state}}
\newcommand{\system}{\mathcal{S}}
\newcommand{\transitions}{\mathbb{T}}
\newcommand{\iterating}{lp}
\newcommand{\local}{lo}
\newcommand{\initial}{\mathit{IVal}}
\newcommand{\ptrs}{\mathit{Ptrs}}
\newcommand{\slots}{\mathit{Sl}}
\newcommand{\slot}{s}
\newcommand{\drop}{\mathit{drop}}
\newcommand{\move}{\mathit{move}}
\newcommand{\origin}{\mathit{origin}}
\newcommand{\target}{\mathit{target}}
\newcommand{\success}{\mathit{succ}}
\newcommand{\fail}{\mathit{fail}}
\newcommand{\self}{\mathit{self}}
\newcommand{\traplang}{\mathcal{L}}
\newcommand{\agents}{\mathbb{A}}
\newcommand{\safetyprop}{\mathcal{P}}
\newcommand{\configuration}{c}
\newcommand{\configurationset}{\mathcal{C}}
\newcommand{\trueval}{\mathit{true}}
\newcommand{\falseval}{\mathit{false}}
\renewcommand{\trueval}{\mathit{\top}}
\renewcommand{\falseval}{\mathit{\bot}}
\newcommand{\stateinitial}{\mathit{initial}}
\newcommand{\stateloop}{\mathit{loop}}
\newcommand{\statebreak}{\mathit{break}}
\newcommand{\statecritical}{\mathit{critical}}
\newcommand{\statedone}{\mathit{done}}
\newcommand{\trap}{O}
\newcommand{\ntrap}{\mathcal{N}}
\newcommand{\ntrapalphabet}{\Sigma_{\system,\agents}}
\newcommand{\set}[1]{\left\{ #1 \right\}}
\newcommand{\tuple}[1]{\left< #1 \right>}
\newcommand{\size}[1]{\left| #1 \right|}
\newcommand{\ali}[1]{\begin{array}{l} #1 \end{array}}
\newcommand{\letter}[1]{\left( \!\!\!\! \begin{array}{cc} #1 \end{array} \!\!\!\! \right)}
\newcommand{\notsqcap}{\cancel{\sqcap}}
\newcommand{\lastrow}{\mathit{lastRow}}
\newcommand{\lastcol}{\mathit{lastCol}}
\newcommand{\firstrow}{\mathit{firstRow}}
\newcommand{\firstcol}{\mathit{firstCol}}
\newcommand{\tile}{\mathit{tile}}
\newcommand{\badflag}{\mathit{badflag}}
\newcommand{\heron}{\texttt{heron}}
\title{Abduction of trap invariants in parameterized systems}
\author{Javier Esparza
\institute{Technical University of Munich}
\email{esparza@in.tum.de}
\and Mikhail Raskin
\institute{Technical University of Munich}
\email{raskin@in.tum.de}
\and Christoph Welzel
\institute{Technical University of Munich}
\email{welzel@in.tum.de}}
\begin{document}


\maketitle

\begin{abstract}
  In a previous paper we have presented a CEGAR approach for the verification
  of parameterized systems with an arbitrary number of processes organized in
  an array or a ring \cite{ERW21}. The technique is based on the iterative computation of
  \emph{parameterized invariants}, i.e., infinite families of invariants for
  the infinitely many instances of the system. Safety properties are proved by
  checking that every global configuration of the system satisfying all
  parameterized invariants also satisfies the property; we have shown that this
  check can be reduced to the satisfiability problem for Monadic Second Order
  on words, which is decidable.

  A strong limitation of the approach is that processes can only have a fixed
  number of variables with a fixed finite range. In particular, they cannot use
  variables with range $[0,N-1]$, where $N$ is the number of processes, which
  appear in many standard distributed algorithms. In this paper, we extend our
  technique to this case. While conducting the check whether a safety property
  is inductive assuming a computed set of invariants becomes undecidable, we
  show how to reduce it to checking satisfiability of a first-order formula. We
  report on experiments showing that automatic first-order theorem provers can
  still perform this check for a collection of non-trivial examples.
  Additionally, we can give small sets of readable invariants for these checks.
\end{abstract}

\section{Introduction}
Many distributed systems consist of an arbitrary number of processes executing
the same algorithm. For every fixed number of processes the system has a finite
state space, and can be verified using conventional model-checking techniques.
However, this technique cannot prove that \emph{all} instances of the system,
one for each number of processes, are correct. Parameterized verification
designs verification procedures for this task. It has developed a number of
techniques, based, among others, on automata theory \cite{AbdullaJNS04},
decidable fragments of first-order logic
\cite{BaukusBLS00,MSZ18,PadonMPSS16}, the theory of well-quasi-orders
\cite{AbdullaCJT96,AbdullaST18,FinkelS01}, or the theory of Vector Addition
Systems \cite{BlondinEH0M20,EsparzaGLM17,GantyM12,GS92}.

In recent work, we have investigated the problem of not only proving that a
parameterized system satisfies a given safety property, but also providing an
explanation of why the property holds in terms of \emph{parameterized
invariants} \cite{BozgaEISW20,BozgaIS21,ERW21}. A parameterized invariant is an
infinite family of invariants, typically one for each number of processes, that
can be finitely described in an adequate language.

Assume for simplicity that an instance of the system consists of a tuple of
processes $\tuple{P_0, \ldots, P_{N-1}}$, each of them with states taken from a
\emph{finite} set of states $Q$. We consider variables as processes; for
example, a boolean variable is a process with states true and false.  A global
state of the instance is a tuple $\tuple{q_0, \ldots, q_{N-1}} \in Q^N$.
Previous research has determined that instances with a small number of
processes can often be proved correct using very simple inductive invariants,
called \emph{trap invariants}. A trap invariant is characterized by a tuple
$\tuple{Q_0, \ldots, Q_{N-1}} \in (2^Q)^N$, called a \emph{trap}. The invariant
states: for every reachable global state $\tuple{q_0, \ldots, q_{N-1}}$ there
exists at least one index $0 \leq i < N$ such that $q_i \in Q_i$. So, loosely
speaking, the invariant says that the trap $\tuple{Q_0, \ldots, Q_{N-1}}$
always remains populated. In the rest of this paper we identify a trap and its
associated invariant if it is clear from the context what we refer to. A set of
traps proves a safety property if every global state satisfying all the
associated trap invariants satisfies the property.

Observe that a trap can be seen as a word of length $N$ over the alphabet
$2^Q$. A \emph{parametric trap} is a regular language over $2^Q$ whose words
are traps. The words of a parametric trap can have arbitrary length, and
therefore induce invariants of different instances. In \cite{ERW21} we have
developed a successful parameterized verification technique, using a CEGAR
loop computing a refinement via traps. The loop maintains a set of parametric
traps, initially empty. Each iteration considers an instance $\tuple{P_0,
\ldots, P_{N-1}}$ of the system, and consists of three steps:
\begin{itemize}
  \item \textbf{Find}. Find traps $T_1, \ldots, T_k$ whose associated
    invariants prove the safety property for $\tuple{P_0, \ldots, P_{N-1}}$. \\
    This is done using finite-state verification techniques.
  \item \textbf{Abduct}. Abduct each trap $T_j$ (or at least some of them) into
    a parametric trap ${\cal T}_j \subseteq (2^Q)^*$. \\ The abduction
    procedure follows from the very simple structure of traps, and its
    correctness is guaranteed by a theorem.
  \item \textbf{Check}. Check whether the current collection of parametric
    traps proves the safety property for \emph{every} instance of the system.
    If not, move to the next instance that cannot be proven correct yet, and
    iterate.

    \noindent In \cite{BozgaEISW20,ERW21} it is shown that this step reduces
    to the satisfiability of a formula of monadic second-order logic on words,
    and is, therefore, decidable.
\end{itemize}

The main restriction of this approach is the fact that the set of states $Q$ of
a process cannot depend on the number $N$ of processes. In particular, this
forbids the use of variables with range $[0, N-1]$, where $N$ is the number of
processes. Using such variables as references to other processes allows to
model \emph{non-atomic} global checks which are essential for many distributed
algorithms \cite{HerlihyS08,Lynch96}; assume, for example, processes that only
have local communication capabilities (as opposed to e.g broadcast
communication) but must implement some kind of mutual exclusion. Such variables
require that a process has to store the values $\set{0, \ldots, N-1}$ in its
state space. In this paper we address particularly the use of iteration
variables for non-atomic global checks. Our contribution is as follows:
\begin{itemize}
  \item We introduce a formal model allowing to describe global checks via
    looping over all processes, inspecting their local states. This happens
    non-atomically, i.e., after inspecting, say, process 2, and before
    inspecting process 3, process 2 may change its local state.
  \item We generalize the CEGAR loop of \cite{ERW21}. For this, we first show
    how to generalize the \textbf{Abduct} step to the new formal model. Second,
    we reduce the \textbf{Check} step to the satisfiability of a formula of a
    certain fragment of first-order logic.
\end{itemize}
The price to pay for the added generality is that the fragment is no longer
decidable. However, we show that with the help of a theorem prover we can
automatically find succinct human-readable correctness proofs for standard
mutual exclusion algorithms.

\paragraph{Related work}
Most similar works on parameterized verification focus on systems communicating by rendez-vous
and systems with atomic global checks. For example, this is the case for
\emph{regular model checking} \cite{AbdullaJNS04}, \emph{parameterized Petri
nets} \cite{ERW21}, \emph{component-based systems} \cite{BozgaEISW20,BozgaIS21}
and \emph{model checking modulo theories} \cite{GhilardiR10}. The latter work
admits unbounded domains using decidable theories but, to the best of our
knowledge, does not explicitly model non-atomic checks. This, however, is the
main focus of the presented work: an explicit extension of the methodology of
\cite{ERW21} to account for \emph{non-atomic} global checks. To this end, we
align our work more with the approaches of \emph{view-abstraction}
\cite{AbdullaHH16} and \emph{symbolic scheme for over-approximations} of
parameterized systems \cite{AbdullaHDR08}. In contrast to these, we
see the benefit of our approach in providing \emph{readable} explanations of
safety proofs in form of first-order formulas. The drawback of our approach is
a significant longer computation time to establish the desired safety
properties. Moreover, the question whether the computed invariants induce the
desired safety property is, in general, undecidable
(cp.~Theorem~\ref{thm:undecidableSatisfiability}).

\section{Formal model}
First, we introduce our model of parameterized systems. Let $\states$ and
$\variables$ be  finite sets of states and variables respectively. Each
variable $\var \in \variables$ ranges over a finite set $\values_{\var}$ of
values. A \emph{valuation} of $\variables$ is a mapping assigning to each $\var
\in \variables$ an element of $\values_{\var}$.

Every $N > 0$ defines an instance of the parameterized system, consisting of
$N$ \emph{agents} with indices $0, 1, \ldots, N-1$. Every agent has a copy of
$Q$ as states, and maintains its own copy of the variables $\variables$. All
agents start in the same initial state $q_0$, with the same initial valuation
$\initial$.

Agents can execute two types of transitions. \emph{Local} transitions are of
the form
\begin{equation}
  \label{eq:local-transition}
  \tuple{\origin, \tuple{\var_{1} \coloneqq \val_{1}, \ldots, \var_{k}
  \coloneqq \val_{k}}, \target}
\end{equation}
where $\origin$ and $\target$ are states, $\var_1, \ldots, \var_{k}$ are
distinct variables, and $\val_{i} \in \values_{\var_{i}}$ for all $1 \leq i
\leq k$. Intuitively, the action allows an agent in state $\origin$ to set its
own variables $\var_1, \ldots, \var_{k}$ to $\val_1, \ldots, \val_{k}$ and move
to state $\target$, all in one single atomic step.

\emph{Loop} transitions allow agents to loop over all agents inspecting their
variables. The inspecting agent or \emph{inspector} first inspects the agent
with index $0$, then the agent with index $1$, and so on, \emph{in different
atomic steps}. Formally, a loop transition has the form
\begin{equation}
  \label{eq:iterating-transition}
  \tuple{\origin, \varphi, \target_{\success}, \target_{\fail}},
\end{equation}
where $\origin$, $\target_{\success}$, and $\target_{\fail}$ are states, and
$\varphi$ is a boolean combination of atoms of the form $\var = \val$ for $\var
\in \variables$ and $\val \in \values_{\var}$ and a predicate $\self$ which
corresponds to the fact that the inspector currently inspects itself. The
inspector conducts a sequence of atomic steps. At the $j$-th step the inspector
checks if the current valuation of the variables of the agent with index $j$
satisfies $\varphi$. If this is the case and $j = n-1$ the inspector moves to
state $\target_{\success}$, if this is the case but $j < n-1$ the inspector
continues to execute the loop transition $t$ and the next time the inspector is
scheduled it checks the agent with index $(j+1)$. Either way, if the agent with
index $j$ does not satisfy $\varphi$ the inspector moves to state
$\target_{\fail}$.

\begin{remark}
\label{rem:loctrans}
  All our results can be easily extended to systems with a third kind of
  transitions of the form $\tuple{\origin, \varphi_l, \varphi_r,
  \target_{\success}, \target_{\fail}}$, allowing an agent to check if the
  variables of its left and right neighbors satisfy $\varphi_l$ and
  $\varphi_r$, respectively, and move to $\target_{\fail}$ or
  $\target_{\success}$ depending on the result. We omit them for simplicity.
\end{remark}

\begin{definition}[Parameterized System]
  A parameterized system over a set $\variables$ of variables is a tuple
  $\system = \tuple{\states, q_{0}, \initial, \transitions_{\local},
  \transitions_{\iterating}}$ where $q_{0} \in \states$ is an \emph{initial
  state}, $\initial$ is an \emph{initial valuation}, and
  $\transitions_{\local}$, $\transitions_{\iterating}$ are finite sets of local
  and loop transitions, respectively.
\end{definition}

\paragraph{Semantics.}  A local configuration of an agent is a triple
consisting of a \emph{location}, a \emph{valuation}, and a \emph{pointer}. The
location is either a state or a loop transition; intuitively, if the
current location is a loop transition $t$, then the agent is currently an
inspector in the \enquote{middle} of executing $t$. The valuation contains
the current values of the agent's variables. The pointer is only important when
the current location is a loop transition $t$; in this case, the pointer
is the index of the agent that the inspector is going to inspect next. With a
small abuse of language, we say that $t$ \emph{points to} this index. A global
configuration, or just configuration for short, is a sequence of local
configurations, one for each agent.

It is convenient to describe global configurations, and the transitions between
them, using a more verbose representation. We first introduce it informally
with the help of an example, a very simplified version of Dijkstra's mutual
exclusion algorithm \cite{Dijkstra2002} \footnote{The version still ensures
mutual exclusion, but offers no progress guarantees. Our experimental results
also include an honest representation of Dijkstra's algorithm.}. Then we give
the formal definition.

\begin{example}[Dijkstra's algorithm for mutual exclusion]
  \label{ex:running-example-spec}
  A reduced version of Dijkstra's algorithm for mutual exclusion can be
  modeled with states $\set{\stateinitial, \stateloop,
  \statebreak, \statecritical, \statedone}$ such that $\stateinitial$ is
  the initial state, and one single boolean variable $b$ with values
  $\set{\trueval, \falseval}$ and initial value $\falseval$. There is one loop
  transition
  \begin{equation*}
    t_{\iterating} = \tuple{\stateloop, \self \lor b = \falseval,
    \statecritical, \statebreak}
  \end{equation*}
  and four local transitions:
  \begin{align*}
    &\tuple{\stateinitial, \tuple{b \coloneqq \trueval}, \stateloop}
    &\tuple{\statebreak, \tuple{b \coloneqq \falseval}, \stateinitial}\\
    &\tuple{\statecritical, \tuple{}, \statedone}
    &\tuple{\statedone, \tuple{b \coloneqq \falseval}, \stateinitial}.
  \end{align*}
\end{example}

Figure \ref{fig:step} describes two steps between configurations of the
instance of Dijkstra's algorithm with three agents. The agents have indices 0,
1, and 2. A local configuration of an agent is represented as a column vector
with five components, labeled $\loc,  \mathit{var}, 0.t_{\iterating},
1.t_{\iterating}, 2.t_{\iterating}$ in the picture. Component $\loc$ gives the
current location of  the agent; for example, in the global configuration at the
top left of Figure~\ref{fig:step}, agent 0 is currently in state
\textit{break}, while agents 1 and 2 are currently executing $t_{\iterating}$.
Component \textit{var} gives the current value of variable $b$. The last three
components give the next process to be inspected. More precisely, if agent $i$
is executing $t_{\iterating}$, and the next agent it will inspect is the one
with index $j$, then we write an $\uparrow$-symbol in the $i.t_{\iterating}$
component of vector $j$. Otherwise, we leave the component blank. In the formal
definitions the absence of $\uparrow$ is denoted with $\textvisiblespace$. A
(global) configuration is a sequence of three vectors, one for each agent. In
the top step of the picture, agent 1 checks variable $b$ of agent 0. Since the
current value of the variable is $\trueval$, the agent moves to the failing
state, which is $\statebreak$. In the bottom step of the picture, agent 1
checks variable $b$ of agent 2. Since the current value is $\falseval$ and $2 =
N-1$, the agent moves to $\statecritical$.

\begin{figure}[t]
\begin{center}
  \begin{minipage}{0.40\textwidth}
    \begin{center}
      \scalebox{0.8}{\begin{tabular}{ccccccc}
        Index & &  $0$ & $1$ & $2$ \\[0.3cm]
        $\ali{
          \loc \\
          \mathit{var}\\
           0.t_{\iterating} \\
           1.t_{\iterating} \\
           2.t_{\iterating} \\
        }$ & \qquad\qquad &
        $\letter{
          \statebreak\\
          \trueval\\
          \\
          \uparrow\\
          \\
        }$ &
        $\letter{
          t_{\iterating}\\
          \trueval\\
          \\
          \\
          \uparrow\\
        }$ &
        $\letter{
          t_{\iterating}\\
          \trueval\\
          \\
          \\
          \\
        }$
      \end{tabular}}
    \end{center}
  \end{minipage}%
  \begin{minipage}{0.10\textwidth}
    \begin{center}
      \scalebox{3}{$\leadsto$}
    \end{center}
  \end{minipage}%
  \begin{minipage}{0.3\textwidth}
    \begin{center}
      \scalebox{0.8}{\begin{tabular}{ccccc}
        $0$ & $1$ & $2$ &  \\[0.3cm]
        $\letter{
          \statebreak\\
          \trueval\\
          \\
          \\
          \\
        }$ &
        $\letter{
          \statebreak\\
          \trueval\\
          \\
          \\
          \uparrow\\
        }$ &
        $\letter{
          t_{\iterating}\\
          \trueval\\
          \\
          \\
          \\
        }$ &
      \end{tabular}}
    \end{center}
  \end{minipage}%
  \end{center}
  \begin{center}
  \begin{minipage}{0.40\textwidth}
    \begin{center}
      \scalebox{0.8}{\begin{tabular}{ccccccc}
        $\ali{
          \loc \\
          \mathit{var}\\
           0.t_{\iterating} \\
           1.t_{\iterating} \\
           2.t_{\iterating} \\
        }$ & \qquad\qquad &
        $\letter{
          \stateinitial\\
          \falseval\\
          \\
          \\
          \\
        }$ &
        $\letter{
          t_{\iterating}\\
          \trueval\\
          \\
          \\
          \\
        }$ &
        $\letter{
          \stateinitial\\
          \falseval\\
          \\
          \uparrow\\
          \\
        }$
      \end{tabular}}
    \end{center}
  \end{minipage}%
  \begin{minipage}{0.10\textwidth}
    \begin{center}
      \scalebox{3}{$\leadsto$}
    \end{center}
  \end{minipage}%
  \begin{minipage}{0.3\textwidth}
    \begin{center}
      \scalebox{0.8}{\begin{tabular}{ccccc}
        $\letter{
          \stateinitial\\
          \falseval\\
          \\
          \\
          \\
        }$ &
        $\letter{
          \statecritical\\
          \trueval\\
          \\
          \\
          \\
        }$ &
        $\letter{
          \stateinitial\\
          \falseval\\
          \\
          \\
          \\
        }$ &
      \end{tabular}}
    \end{center}
  \end{minipage}%
  \end{center}
  \caption{Steps between configurations of Dijkstra's algorithm for a 3-agent
  instance of Example~\ref{ex:running-example-spec}.}
  \label{fig:step}
  \end{figure}

Formally, configurations are \emph{slotted words}.

\begin{definition}[Slotted words]
  Let $\slots = \set{s_1, \ldots, s_k}$ be a finite set of \emph{slots}, and
  let $\Sigma_{s_1}, \ldots, \Sigma_{s_k}$ be finite alphabets, one for each
  slot. A \emph{slotted word} over $\slots$ is a finite word over the alphabet
  $\Sigma_{s_1} \times \cdots \times \Sigma_{s_k}$.

  Given a slotted word $w = a_{0} \ldots a_{N-1}$ and $0 \leq j < N$, we let
  $a_{j}(s) \in \Sigma_{s}$ denote the component of slot $s$ of $a_{j}$.
\end{definition}

When it is clear from the context, we call a slotted word just a word.

\begin{definition}
  Let $\system = \tuple{\states, q_{0}, \initial, \transitions_{\local},
  \transitions_{\iterating}}$ be a parameterized system over
  $\variables = \{\var_1, \ldots, \var_k\}$, where $\transitions_{\iterating} =
  \{t_1, \ldots, t_\ell \}$, and let $N > 0$.  The set of slots of $\system$
  is
  \begin{equation*}
    \slots_\system = \set{ \loc, \var_1, \ldots, \var_k, 0.t_1, \ldots,
    0.t_\ell, \ldots, (N-1).t_1, \ldots, (N-1).t_\ell }
  \end{equation*}
  with alphabets $\Sigma_\loc = Q \cup \transitions_{\iterating}$;
  $\Sigma_{\var_i} = \values_{\var_i}$ for every $1 \leq i \leq k$; and
  $\Sigma_{i.t_j} = \set{\uparrow, \textvisiblespace}$ for every $0 \leq i < N$
  and $1 \leq j \leq \ell$. With a small abuse of language, we abbreviate the
  description of $\slots_\system$ to
  \begin{equation*}
    \slots_\system = \set{ \loc, \variables,
    0.\transitions_{\iterating}, \ldots,  (N-1).\transitions_{\iterating}}
  \end{equation*}

\smallskip

  A \emph{configuration} of the instance of $\system$ with $N$ agents is a
  slotted word $c_{0} \ldots c_{N-1}$  over $\slots_\system$ satisfying the
  following condition for every $0 \leq i, j < N$ and every $t \in
  \transitions_{\iterating}$: $c_j(i.t) = {\uparrow}$ if{}f $c_i(\loc) = t$ and
  $c_k(i.t) = \textvisiblespace$ for every $k \neq j$. (Intuitively, $c_j(i.t)
  = {\uparrow}$ if{}f agent $i$ is currently executing $t$ and it inspects
  agent $j$ next.) We call each letter $c_i$ a \emph{local configuration}.

  The \emph{initial configuration} of the instance is the word $c_0^N$, where
  $c_0$ is the local configuration given by $c_0(\loc) = q_0$, $c_0(\var_i) =
  \initial(\var_i)$ for every $1 \leq i \leq k$, and $c_0({i.t_j}) =
  \textvisiblespace$ for every $0 \leq i < N$ and $1 \leq j \leq \ell$.
\end{definition}
The semantics of the instances is given by a  transition relation $\vdash$ on
configurations on the basis of the transitions of $\system$. Its formal
definition is routine, and can be found in Appendix~A{ }
of the full version~\cite{esparza2021abduction}.
A configuration $\configuration$ is \emph{reachable} if $\configuration_{0}
\vdash^{*} \configuration$, where $C_0$ is the initial configuration of some
instance.

\section{Analysis of instances}
\label{sec:instance-analysis}
For this section, we fix a parameterized system $\system$ and an instance of
$\system$ of size $N > 1$ as $\tuple{C_{0}, \vdash}$ where $C_{0}$ denotes the
initial configuration, and $\vdash$ its transition relation. Our analysis
relies on techniques originally developed for the analysis of specific
instances of a system modeled as a Petri net. The technique allows one to
compute inductive disjunctive invariants -- so called \emph{traps} -- of the
set of reachable configurations of the instance \cite{EsparzaLMMN14}. In order
to introduce them we first define \emph{powerwords}.

\begin{definition}
  Let $\slots$ be a set of slots with alphabets $\Sigma_{s}$ for every $s \in
  \slots$. A \emph{slotted powerword} is a slotted word over $\slots$ but with
  alphabets $2^{\Sigma_{s}}$ for every $s \in \slots$.

  A word $w = w_{0} \ldots w_{N-1}$ and a powerword $\trap = \trap_{0} \ldots
  \trap_{m-1}$ are \emph{compatible} if $m = n$. Further, $w$ \emph{intersects}
  $\trap$, denoted $w \sqcap \trap$,  if $\trap$ and $w$ are compatible and
  there is an index $0 \leq i < N$ and a slot $s$ such that $w_{i}(s) \in
  \trap_{i}(s)$.
\end{definition}

Intuitively, a trap of this instance $\tuple{C_{0}, \vdash}$ is a
powerword of the instance satisfying the following property: the initial
configuration intersects the trap, and moreover this property is inductive,
i.e., the successor of a configuration intersecting the trap also intersects
the trap.

\begin{definition}[Trap]
\label{def:trap}
  Let $\system$ be a parameterized system, and let $\tuple{C_{0}, \vdash}$ be
  an instance of $\system$. A powerword $\trap$ over $\slots_\system$ is a
  \emph{trap} of $\tuple{C_{0}, \vdash}$ if $C_{0} \sqcap \trap$ and all
  configurations $C, C'$ satisfy: if $C \vdash C'$ and $C \sqcap \trap$, then
  $C' \sqcap \trap$.
\end{definition}

\begin{example}
  \label{ex:trap}
  The following powerword is a trap of the 7-agent-instance of
  Example~\ref{ex:running-example-spec}.
  \begin{center}
    \scalebox{0.8}{\begin{tabular}{ccccccccccccc}
      & &  $0$ & $1$ & $2$ & $3$ & $4$ & $5$ & $6$ \\
      $\ali{
        \mathit{loc} \\
        \mathit{var}\\
         0.t_{\iterating} \\
         1.t_{\iterating} \\
         2.t_{\iterating} \\
         3.t_{\iterating} \\
         4.t_{\iterating} \\
         5.t_{\iterating} \\
         6.t_{\iterating} \\
      }$ & \qquad &
      $\letter{
        \emptyset\\
        \emptyset\\
        \emptyset \\
        \emptyset \\
        \emptyset\\
        \set{\uparrow}\\
        \emptyset\\
        \set{\uparrow}\\
        \emptyset\\
      }$ &
      $\letter{
        \emptyset\\
        \emptyset\\
        \emptyset \\
        \emptyset \\
        \emptyset\\
        \set{\uparrow}\\
        \emptyset\\
        \set{\uparrow}\\
        \emptyset\\
      }$ &
      $\letter{
        \emptyset\\
        \emptyset\\
        \emptyset \\
        \emptyset \\
        \emptyset\\
        \set{\uparrow}\\
        \emptyset\\
        \emptyset\\
        \emptyset\\
      }$ &
      $\letter{
        \set{\statebreak, \stateloop}\\
        \set{\falseval}\\
        \emptyset \\
        \emptyset \\
        \emptyset\\
        \set{\uparrow}\\
        \emptyset\\
        \set{\uparrow}\\
        \emptyset\\
      }$ &
      $\letter{
        \emptyset\\
        \emptyset\\
        \emptyset \\
        \emptyset \\
        \emptyset\\
        \set{\uparrow}\\
        \emptyset\\
        \emptyset\\
        \emptyset\\
      }$  &
      $\letter{
        \set{\statebreak, \stateloop}\\
        \set{\falseval}\\
        \emptyset \\
        \emptyset \\
        \emptyset\\
        \set{\uparrow}\\
        \emptyset\\
        \emptyset\\
        \emptyset\\
      }$ &
      $\letter{
        \emptyset\\
        \emptyset\\
        \emptyset \\
        \emptyset \\
        \emptyset\\
        \emptyset\\
        \emptyset\\
        \emptyset\\
        \emptyset\\
      }$
    \end{tabular}}
  \end{center}
  First observe that the inductiveness
  of intersecting this powerword cannot be changed by any agent but $3$ or $5$
  executing a transition. Now, if either agent $3$ or $5$ executes a local
  transition that sets $\mathit{var}$ to $\bot$ or moves into $\stateloop$ the
  resulting configuration necessarily marks this trap. If agent $3$ ($5$) moves
  from state $\statecritical$ to state $\statedone$ then any configuration that
  intersects this trap before necessarily does not do so at the $\loc$-slot of
  agent $3$ ($5$) and, hence, the resulting configuration still intersects the
  trap. It remains to consider the loop transition. If agent $3$ or $5$ fail
  the loop inspection then either agent moves into state $\statebreak$ which
  necessarily yields a configuration which intersects the trap. If agent $3$
  advances its pointer to some agent in $\set{0,1,2,3,4,5}$
  this yields a configuration intersecting the trap. If agent $3$ advances its
  pointer to agent $6$ then agent $5$ is inspected in this transition. Since
  agent $3$ only advances if $\mathit{var}$ of agent $5$ is currently in state
  $\bot$ the resulting configuration intersects the trap at the
  $\mathit{var}$-slot of agent $5$. If agent $3$ inspects agent $6$ and moves
  to $\statecritical$ only the state of agent $3$ changes from $t_{\iterating}$
  to $\statecritical$ and the $3.t_{\iterating}$-slot of agent $6$. Then,
  however, the resulting configuration intersects the trap at the same slot and
  index as the previous configuration because the occurring change is immaterial
  for the intersection with the trap. Analogous reasoning applies for the
  execution of the loop transition of agent $5$.
\end{example}

Clearly, if $\trap$ is a trap of an instance $\tuple{C_{0}, \vdash}$, then we
have $C' \sqcap \trap$ for every reachable configuration $C'$. Therefore, a set
$I$ of traps induces an over-approximation of the set of reachable
configurations, namely the set of configurations $C$ such that $C \sqcap \trap$
for all $\trap \in I$. Algorithm~\ref{alg:proving-instances} shows a CEGAR
(\emph{counter-example guided abstraction refinement}) loop, adapted from
\cite{EsparzaLMMN14}, to compute a set $I$ of traps proving a given safety
property. Starting with $I = \emptyset$, the algorithm searches for a
configuration (reachable or not) that intersects all traps of $I$ but violates
the safety property. If there is none, the property holds, and the algorithm
terminates; otherwise, the algorithm searches in Line~\ref{lst:line:find-trap}
for a trap that is not intersected by the configuration, and adds it to $I$. It
is well known that the set of such traps can be characterized as the solutions
of a SAT formula.

\begin{algorithm}
  \caption{CEGAR loop to prove configurations unreachable.}
  \label{alg:proving-instances}
  \begin{algorithmic}[1]
    \Require Instance $\tuple{C_{0}, \vdash}$ and collection of
    bad configurations $\mathbb{B}$.
    \Ensure All $B \in \mathbb{B}$ are unreachable from
    $\tuple{C_{0}, \vdash}$ if the result is true.
    \If{$C_{0} \in \mathbb{B}$}
    \State \Return false
    \EndIf
    \State $I \gets \emptyset$
    \While{$C \vdash C'$ exists s.t. $C \sqcap \trap$ for all $\trap \in I$ and
    $C' \in \mathbb{B}$}
    \If{trap $\trap$ exists with $C \notsqcap \trap$}
    \label{lst:line:find-trap}
    \State $I \gets I \cup \set{\trap}$
    \Else
    \State \Return false
    \EndIf
    \EndWhile
    \State \Return true with $I$
  \end{algorithmic}
\end{algorithm}

\section{Parameterized analysis}

In the trap of Example \ref{ex:trap}, the slots  $i.t_{\iterating},$ for $i \in
\set{0,1,2,4,6}$ are \emph{empty}: all their entries are the empty set.
Intuitively, this means that the trap expresses an invariant involving only the
agents 3 and 5. The important fact is that this invariant is \emph{independent}
of how many other agents there are. Indeed, Figure \ref{fig:partraps} is just
one instance of a family of traps for instances of this system. Each element of
this family can be represented as a slotted word with a \emph{fixed} number of
slots, independent of the number of agents. For this we first remove all the
empty slots. Observe, however, that this loses information because we no
longer know which are the indices of the agents of the two nonempty rows for
the loop transitions. So, we also add a new slot \textit{index} with
alphabet $\set{p_0, p_1}$  to mark these two agents. We call this process
\emph{normalization}. The result of normalizing the family at the top of
Figure~\ref{fig:partraps} is shown in the middle of the figure.

The normalized family can be \emph{finitely represented} by the regular
expression at the bottom of the figure. This finite expression can be seen as a
\emph{parameterized invariant} of the system. In the rest of the section we
formalize the notion of a normalized trap, and present a simple \emph{abduction
procedure} that allows us to automatically generalize certain traps into
regular expressions. The procedure can already be sketched in our running
example. Consider, for example, the normalized slotted powerword $\ntrap_{0}
\ldots \ntrap_{N-1}$ shown in the middle of Figure~\ref{fig:partraps} for $2 <
i$ and $i+2 < j < (N-1)-2$. Note that Example~\ref{ex:trap} shows a
conceptually similar trap, although there $j = i + 2$. Regardless, in
Figure~\ref{fig:partraps} we have $\ntrap_{0} = \ntrap_{1} = \ntrap_{2} =
\ldots = \ntrap_{i-1}$ and $\ntrap_{i+1} = \ntrap_{i+2} = \ldots =
\ntrap_{j-1}$. We prove that we can replace  $\ntrap_0\ntrap_1\ntrap_{2}$ by
the regular expression $\ntrap_0^{+}\ntrap_1\ntrap_{2}$, and similarly for
$\ntrap_{i+1}\ntrap_{i+2}\ntrap_{i+3}$, with the guarantee that all words of
the regular expression are traps of the corresponding instances. Indeed, we
prove that any trap with $3$-repetitions of the same letter allows for this
generalization. The complete process of obtaining from an actual trap a
normalized one and, then, a regular expression is sketched in the three lines
of Figure~\ref{fig:partraps}.

\begin{figure}[ht]
  \begin{center}
     \tabcolsep=2pt
     \scalebox{0.7}{\begin{tabular}{lcccccccccccccc}
      & &  $0$ &  & $i-1$ & $i$ & $i+1$ &  & $j-1$ & $j$ & $j+1$ & & $N-1$\\[0.3cm]
      $\ali{
        \mathit{loc} \\
        \mathit{var}\\
         \textcolor{red}{0.t_{\iterating}} \\
         \textcolor{red}{\cdots} \\
         \textcolor{red}{(i-1).t_{\iterating}} \\
         i.t_{\iterating} \\
         \textcolor{red}{(i+1).t_{\iterating}} \\
         \textcolor{red}{\cdots} \\
         \textcolor{red}{(j-1).t_{\iterating}} \\
         j.t_{\iterating} \\
         \textcolor{red}{(j+1).t_{\iterating}}\\
         \textcolor{red}{\cdots} \\
         \textcolor{red}{(N-1).t_{\iterating}}\\
      }$ & \qquad &
      $\letter{
        \emptyset\\
        \emptyset\\
        \textcolor{red}{\emptyset}\\
        \textcolor{red}{\cdots} \\
        \textcolor{red}{\emptyset}\\
        \set{\uparrow}\\
        \textcolor{red}{\emptyset}\\
        \textcolor{red}{\cdots} \\
        \textcolor{red}{\emptyset}\\
        \set{\uparrow}\\
        \textcolor{red}{\emptyset}\\
        \textcolor{red}{\cdots} \\
        \textcolor{red}{\emptyset}\\
      }$ &
      $\begin{array}{c} \\ \\  \cdots \\  \\ \cdots \\  \\  \\ \end{array}$
      &
      $\letter{
        \emptyset\\
        \emptyset\\
        \textcolor{red}{\emptyset}\\
        \textcolor{red}{\cdots} \\
        \textcolor{red}{\emptyset}\\
        \set{\uparrow}\\
        \textcolor{red}{\emptyset}\\
        \textcolor{red}{\cdots} \\
        \textcolor{red}{\emptyset}\\
        \set{\uparrow}\\
        \textcolor{red}{\emptyset}\\
        \textcolor{red}{\cdots} \\
        \textcolor{red}{\emptyset}\\
      }$ &
      $\letter{
        \set{\statebreak, \stateloop}\\
        \set{\falseval}\\
        \textcolor{red}{\emptyset}\\
        \textcolor{red}{\cdots} \\
        \textcolor{red}{\emptyset}\\
        \set{\uparrow}\\
        \textcolor{red}{\emptyset}\\
        \textcolor{red}{\cdots} \\
        \textcolor{red}{\emptyset}\\
        \set{\uparrow}\\
        \textcolor{red}{\emptyset}\\
        \textcolor{red}{\cdots} \\
        \textcolor{red}{\emptyset}\\
      }$ &
      $\letter{
        \emptyset\\
        \emptyset\\
        \textcolor{red}{\emptyset}\\
        \textcolor{red}{\cdots} \\
        \textcolor{red}{\emptyset}\\
        \set{\uparrow}\\
        \textcolor{red}{\emptyset}\\
        \textcolor{red}{\cdots} \\
        \textcolor{red}{\emptyset}\\
        \emptyset\\
        \textcolor{red}{\emptyset}\\
        \textcolor{red}{\cdots} \\
        \textcolor{red}{\emptyset}\\
      }$ &
      $\begin{array}{c} \\ \\  \cdots \\  \\ \cdots \\  \\  \\ \end{array}$
      &
      $\letter{
        \emptyset\\
        \emptyset\\
        \textcolor{red}{\emptyset}\\
        \textcolor{red}{\cdots} \\
        \textcolor{red}{\emptyset}\\
        \set{\uparrow}\\
        \textcolor{red}{\emptyset}\\
        \textcolor{red}{\cdots} \\
        \textcolor{red}{\emptyset}\\
        \emptyset\\
        \textcolor{red}{\emptyset}\\
        \textcolor{red}{\cdots} \\
        \textcolor{red}{\emptyset}\\
      }$ &
      $\letter{
        \set{\statebreak, \stateloop}\\
        \set{\falseval}\\
        \textcolor{red}{\emptyset}\\
        \textcolor{red}{\cdots} \\
        \textcolor{red}{\emptyset}\\
        \set{\uparrow}\\
        \textcolor{red}{\emptyset}\\
        \textcolor{red}{\cdots} \\
        \textcolor{red}{\emptyset}\\
        \emptyset\\
        \textcolor{red}{\emptyset}\\
        \textcolor{red}{\cdots} \\
        \textcolor{red}{\emptyset}\\
      }$ &
      $\letter{
        \emptyset\\
        \emptyset\\
        \textcolor{red}{\emptyset}\\
        \textcolor{red}{\cdots} \\
        \textcolor{red}{\emptyset}\\
        \emptyset\\
        \textcolor{red}{\emptyset}\\
        \textcolor{red}{\cdots} \\
        \textcolor{red}{\emptyset}\\
        \emptyset\\
        \textcolor{red}{\emptyset}\\
        \textcolor{red}{\cdots} \\
        \textcolor{red}{\emptyset}\\
      }$ &
      $\begin{array}{cc} \\ \\  \cdots \\  \\ \cdots \\  \\  \\ \end{array}$
      &
      $\letter{
        \emptyset\\
        \emptyset\\
        \textcolor{red}{\emptyset}\\
        \textcolor{red}{\cdots} \\
        \textcolor{red}{\emptyset}\\
        \emptyset\\
        \textcolor{red}{\emptyset}\\
        \textcolor{red}{\cdots} \\
        \textcolor{red}{\emptyset}\\
        \emptyset\\
        \textcolor{red}{\emptyset}\\
        \textcolor{red}{\cdots} \\
        \textcolor{red}{\emptyset}\\
      }$
      {\ }  \\
      $\ali{
        \indexslot \\
        \mathit{loc} \\
        \mathit{var}\\
         p_0.t_{\iterating} \\
         p_1.t_{\iterating} \\
      }$ & \qquad &
      $\letter{
        \textvisiblespace\\
        \emptyset\\
        \emptyset\\
        \set{\uparrow}\\
        \set{\uparrow}\\
      }$ &
      $\begin{array}{cc} \\ \\  \cdots \\  \\ \cdots \\  \\  \\ \end{array}$
      &
      $\letter{
        \textvisiblespace\\
        \emptyset\\
        \emptyset\\
        \set{\uparrow}\\
        \set{\uparrow}\\
      }$ &
      $\letter{
        p_0\\
        \set{\statebreak, \stateloop}\\
        \set{\falseval}\\
        \set{\uparrow}\\
        \set{\uparrow}\\
      }$ &
      $\letter{
        \textvisiblespace\\
        \emptyset\\
        \emptyset\\
        \set{\uparrow}\\
        \emptyset\\
      }$ &
      $\begin{array}{cc} \\ \\  \cdots \\  \\ \cdots \\  \\  \\ \end{array}$
      &
      $\letter{
        \textvisiblespace\\
        \emptyset\\
        \emptyset\\
        \set{\uparrow}\\
        \emptyset\\
      }$ &
      $\letter{
        p_1\\
        \set{\statebreak, \stateloop}\\
        \set{\falseval}\\
        \set{\uparrow}\\
        \emptyset\\
      }$ &
      $\letter{
        \textvisiblespace\\
        \emptyset\\
        \emptyset\\
        \emptyset\\
        \emptyset\\
      }$ &
      $\begin{array}{cc} \\ \\  \cdots \\  \\ \cdots \\  \\  \\ \end{array}$
      &
      $\letter{
        \textvisiblespace\\
        \emptyset\\
        \emptyset\\
        \emptyset\\
        \emptyset\\
      }$ &
    \end{tabular}}
    {\ } \\
    \scalebox{0.8}{\begin{tabular}{cccccccccccccccc}
      $\letter{
        \textvisiblespace\\
        \emptyset\\
        \emptyset\\
        \set{\uparrow}\\
        \set{\uparrow}\\
      }^{\Huge{+}}$ &
      $\letter{
        \textvisiblespace\\
        \emptyset\\
        \emptyset\\
        \set{\uparrow}\\
        \set{\uparrow}\\
      }$ &
      $\letter{
        \textvisiblespace\\
        \emptyset\\
        \emptyset\\
        \set{\uparrow}\\
        \set{\uparrow}\\
      }$ &
      $\letter{
        p_{0}\\
        \set{\statebreak, \stateloop}\\
        \set{\falseval}\\
        \set{\uparrow}\\
        \set{\uparrow}\\
      }$ &
      $\letter{
        \textvisiblespace\\
        \emptyset\\
        \emptyset\\
        \set{\uparrow}\\
        \emptyset\\
      }^{\Huge{+}}$ &
      $\letter{
        \textvisiblespace\\
        \emptyset\\
        \emptyset\\
        \set{\uparrow}\\
        \emptyset\\
      }$ &
      $\letter{
        \textvisiblespace\\
        \emptyset\\
        \emptyset\\
        \set{\uparrow}\\
        \emptyset\\
      }$ &
      $\letter{
        p_{1}\\
        \set{\statebreak, \stateloop}\\
        \set{\falseval}\\
        \set{\uparrow}\\
        \emptyset\\
      }$  &
      $\letter{
        \textvisiblespace\\
        \emptyset\\
        \emptyset\\
        \emptyset\\
        \emptyset\\
      }^{\Huge{+}}$
      $\letter{
        \textvisiblespace\\
        \emptyset\\
        \emptyset\\
        \emptyset\\
        \emptyset\\
      }$
      $\letter{
        \textvisiblespace\\
        \emptyset\\
        \emptyset\\
        \emptyset\\
        \emptyset\\
      }$
    \end{tabular}}
  \end{center}
\caption{A family of traps, its normalization, and its finite representation as
         a regular expression.}
\label{fig:partraps}
\end{figure}

\subsection{Normalized traps and trap languages}
We now explain how to define sets of traps similar to regular languages. This
requires a finite alphabet which we obtain by only paying attention to the
\enquote{relevant} iteration pointers. The following definition provides the
technical details.
\begin{definition}
  \label{def:norm}
  Let $\system = \tuple{\states, q_{0}, \initial, \transitions_{\local},
  \transitions_{\iterating}}$ be a parameterized system. Let $\agents = \{p_0,
  \ldots, p_{m-1}\}$ be a finite set of \emph{agent names}. A \emph{normalized
  trap} of an instance $\tuple{C_{0}, \vdash}$ over the set of
  agent names $\agents$ is a slotted word $\ntrap_{0} \ldots
  \ntrap_{N-1}$ over the set of slots
  \begin{equation*}
    \set{ \indexslot, \loc, \variables, p_0.\transitions_{\iterating},
    \ldots,  p_{m-1}.\transitions_{\iterating}}
  \end{equation*}
  where the alphabet of $\indexslot$ is $\Sigma_{\indexslot}=\agents
  \cup \set{\textvisiblespace}$, and the other alphabets  are as for normal
  traps, satisfying the following conditions:
  \begin{itemize}
    \item For every name $p_i \in \agents$ there is exactly one letter
      $\ntrap_{j}$ with $\ntrap_{j}(\indexslot) = p_i$. \\ We say that $p_i$ is
      the name of the $j$-th agent.
    \item For every name $p_i \in \agents$ and for every loop transition
      $t_j \in \transitions_{\iterating}$ there is at least one $0 \leq k
      < N$ such that $p_i.t_j(k)= \{\uparrow\}$.
    \item The trap condition, as in Definition \ref{def:trap}.
  \end{itemize}
  We let $\ntrapalphabet$ denote the alphabet of normalized traps over
  $\agents$. A \emph{trap language} over $\ntrapalphabet$ is a regular
  expression of the form $r_0 r_1\ldots r_{\ell-1}$, where $r_i \in \set{a, a^*
  \mid a \in \ntrapalphabet}$ for every $1 \leq i < \ell$.
\end{definition}

The middle part of Figure \ref{fig:partraps} shows a normalized trap over
$\agents = \set{p_0,p_1}$. It is obtained by \emph{normalizing} the trap at the
top of the figure by means of the following procedure:
\begin{itemize}
  \item Remove \enquote{empty} loop slots. Formally, if $i.t_j(k) = \emptyset$
    for every $t_j \in \transitions_{\iterating}$ and $0 \leq k < N$ ,
    remove all the slots of $i. \transitions_{\iterating}$.
  \item Rename all remaining slots $0 \leq i_0, \ldots, i_{m-1} < N$ with
    the \enquote{abstract names} $\set{p_0, \ldots, p_{m-1}}$. Intuitively:
    $p_{k}$ is a placeholder for some index of the word. The corresponding
    concrete index is marked via the additional slot $\indexslot$.
  \item Use the $\indexslot$ slot to give the letters $\ntrap_{i_0}, \ldots,
    \ntrap_{i_{m-1}}$ the names  $p_0, \ldots, p_{m-1}$. Label the others with
    $\textvisiblespace$.
\end{itemize}
The bottom of the figure shows the trap language obtained by normalizing all
the traps shown at the top for all possible values of the indices $i$ and $j$.
The next theorem is the basis of the abduction process that, given a normalized
trap of one instance of the system, produces, if possible, an infinite trap
language of normalized traps.

\begin{theorem}
  \label{thm:main}
  Let $\system$ be a parameterized system. Let $\ntrap_{0} \ldots \ntrap_{N-1}
  \in \ntrapalphabet^{*}$ be a normalized trap of the instance of $\system$
  with $N$ agents. If $\ntrap_i(\indexslot) = \textvisiblespace$ and
  $\ntrap_{i} = \ntrap_{i+1} = \ntrap_{i+2}$, then for every $k \geq 1$ the
  word
  \begin{equation*}
    \ntrap_{0} \ldots \ntrap_{i-1}  \, \ntrap_i^k  \, \ntrap_{i+1} \ldots
    \ntrap_{N-1}
  \end{equation*}
  is a normalized trap of the instance of $\system$ with $N + k$ agents.
\end{theorem}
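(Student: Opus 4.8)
The plan is to verify that the pumped word $w = \ntrap_{0} \ldots \ntrap_{i-1}\,\ntrap_i^{k}\,\ntrap_{i+1}\ldots\ntrap_{N-1}$ meets every requirement of Definition~\ref{def:norm} for the larger instance, where by hypothesis $\ntrap_i(\indexslot) = \textvisiblespace$ and $\ntrap_i = \ntrap_{i+1} = \ntrap_{i+2}$. The two structural conditions are immediate: pumping only duplicates the \emph{unnamed} letter $\ntrap_i$, so each name $p_\ell$ still occurs exactly once, and since duplication never deletes an $\uparrow$, every pair $(p_\ell, t_j)$ still carries at least one $\uparrow$. The real content is the trap condition of Definition~\ref{def:trap}, read through the concrete powerword that $w$ represents (named-agent pointer slots placed at the named index, unnamed ones empty). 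Initial intersection is easy: a witness for $C_{0} \sqcap (\ntrap_0\ldots\ntrap_{N-1})$ is a $\loc$- or $\var$-slot (the initial configuration carries no $\uparrow$), and that slot survives pumping either unchanged outside the block or as a duplicated copy with the same marking inside it, while the larger initial configuration still holds $q_0$ and $\initial$ everywhere. So I would concentrate on inductiveness.

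For inductiveness I first record two facts that make the block \emph{transparent}. Let $B$ be the run of copies of $\ntrap_i$ in $w$; it has length at least three and pumping only lengthens it. All letters of $B$ agree on every pointer slot, so for each name $p_\ell$ and loop $t_j$ the block lies \emph{entirely inside or entirely outside} the set $S_\ell^j = \{\,r : \ntrap_r(p_\ell.t_j) = \{\uparrow\}\,\}$, and the letters bordering the run are untouched by pumping. Now take a step $C \vdash C'$ of the larger instance, performed by an agent $a$, with $C \sqcap w$. A single step changes only slots attached to $a$: its $\loc$- and $\var$-slots, and the pointer slot $a.t$ at the old and new pointer positions. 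Hence if any \emph{unchanged} slot already witnesses $C \sqcap w$ we are done, and I may assume every witness is one of these $a$-slots. Several cases then close from block-uniformity alone: if $a$ advances its pointer between two positions strictly inside the run, then either the run is inside $S_\ell^j$ (the pointer slot still witnesses after the step) or it is outside $S_\ell^j$ (the only possible witness was $a$'s $\loc$-slot, which is unchanged because $a$ is still executing $t$); and local transitions and loop steps entirely outside the block correspond verbatim to the original instance.

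The remaining cases are the genuinely parameterized ones: a loop step crossing into or out of the run, a failing or succeeding inspection at a block position, and a local transition at a block position. Here I would use a \emph{per-step contraction}: map $C$ to a configuration $c$ of the original instance by keeping all non-block agents, the acting agent $a$, the two boundary agents of the run, and---if $a$ is an inspector pointing into the run---the agent it currently inspects; these at most three block agents are placed into the three positions $i, i+1, i+2$, and every other inspector pointing into the run is rerouted to a kept block position. Because the kept agents reproduce exactly the agent $a$ inspects and both run boundaries, the large step projects to a single step $c \vdash c'$ (or a stutter), block-uniformity transports the $a$-witness of $C \sqcap w$ to a witness of $c \sqcap (\ntrap_0\ldots\ntrap_{N-1})$, and the witness of $c' \sqcap (\ntrap_0\ldots\ntrap_{N-1})$ guaranteed by inductiveness of the original trap lifts back to a witness of $C' \sqcap w$. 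The count of kept agents is exactly where \emph{three} equal letters is used: the two outer copies absorb the two boundary behaviours (entering and leaving the run), while the middle copy---both of whose neighbours are copies of $\ntrap_i$---is the fully interior template reproducing the behaviour of every \emph{inserted} copy under failing inspections and local transitions; two copies would conflate the entering and leaving boundaries and leave the interior copies without a verified template.

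I expect the loop transitions to be the main obstacle, for two reasons. First, an inspector sweeps the run one index per atomic step, so the inserted copies are traversed one at a time and the invariant must survive each micro-step; this is what forces the split between steps strictly inside the run (closed by uniformity) and the two boundary-crossing steps. Second, a \emph{global} contraction does not exist, since several inspectors may point into the run at distinct indices and cannot be squeezed into three positions simultaneously; the remedy is to contract \emph{per step} around the unique acting inspector, which is sound precisely because the non-acting inspectors keep their pointer slots fixed and, by uniformity, their contribution to $\sqcap$ is insensitive to which block index they are rerouted to. Minor care is also needed for runs sitting at either end of the array (affecting the step into $\target_{\success}$ and the reset of the pointer to index $0$); and one observes that when $\ntrap_i$ has empty $\loc$- and $\var$-slots---the typical outcome of normalization---the sub-cases in which the acting agent itself lies in the block collapse, since such an agent can never witness $\sqcap$.
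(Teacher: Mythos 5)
Your proposal is correct in substance but takes a genuinely different route from the paper. The paper argues by minimal counterexample on $k$: if the word with $k_0$ copies fails, then by minimality the word with $k_0-1$ copies is still a trap, and a violating step of the larger instance is transported to the smaller one by removing a \emph{single} uninvolved index $m$ via $\drop_{m} \circ \move_{m}^{\transitions_{\iterating}[\leftarrow]}$. Since every transition touches at most three agents and the pumped run has at least four equal letters once $k_0 \geq 2$, such an $m$ always exists; because $\ntrap_m$ is unnamed and equal to its neighbour, shifting pointers off $m$ and deleting that column changes neither the legality of the step nor the intersection status. You instead contract the entire run to the three original positions in one shot, rerouting every pointer that lands in the run. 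Both arguments rest on the same two facts --- at most three agents participate in any step, and equal unnamed letters are interchangeable for $\sqcap$ --- but the paper's one-index-at-a-time peeling keeps the bookkeeping local (exactly one column moves per reduction), whereas your one-shot contraction avoids the induction at the price of having to verify that the collapsed step is still a legal step (adjacency of inspectee and successor for successful inspections, last-index detection, the $\self$ predicate) and that rerouted pointers preserve witnesses in both directions; you assert rather than check these, though they do go through. Your preliminary reduction to the case where every witness sits on a changed slot, and the uniformity argument disposing of pointer advances strictly inside the run, are both sound and appear implicitly in the paper's case analysis.

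One slip worth repairing: the recipe \enquote{keep the two run boundaries, the acting agent, and the inspected agent} can select \emph{four} block agents (a failing inspection whose inspector and inspectee are both interior to the run and distinct), which do not fit into three positions. The fix is immediate --- a failing inspection uses no adjacency, so the boundaries need not be kept there, and any order-preserving choice of three block agents containing the involved ones yields the same contracted powerword --- but as stated the contraction is not well defined in that sub-case.
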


Before we prove this theorem, we introduce the notions
$\move^{\transitions[\leftarrow]}_{k}$,
$\move^{\transitions[\rightarrow]}_{k}$,
$\drop_{k}$. Conceptually, $\move^{\transitions[\leftarrow]}_{k}$ and
$\move^{\transitions[\rightarrow]}_{k}$ move pointers of the loop
transitions $\transitions \subseteq \transitions_{\iterating}$, which are
currently pointing to $k$, to $k-1$ and $k+1$ respectively. For example
consider
\begin{equation*}
  \small
  \move^{\set{t_{\iterating}}[\leftarrow]}_{1}\left(
  \letter{\statebreak\\
        \trueval\\
        \\
        \uparrow\\
        \\
  }\, \letter{
        t_{\iterating}\\
        \trueval\\
        \\
        \\
        \uparrow\\
  }\, \letter{
        t_{\iterating}\\
        \trueval\\
        \\
        \\
        \\
      }\right) =
  \letter{\statebreak\\
        \trueval\\
        \\
        \uparrow\\
        \uparrow\\
  }\, \letter{
        t_{\iterating}\\
        \trueval\\
        \\
        \\
        \\
  }\, \letter{
        t_{\iterating}\\
        \trueval\\
        \\
        \\
        \\
      }.
\end{equation*}
and, analogously,
\begin{equation*}
  \small
\move^{\set{t_{\iterating}}[\rightarrow]}_{1}\left(
\letter{\statebreak\\
      \trueval\\
      \\
      \uparrow\\
      \\
}\, \letter{
      t_{\iterating}\\
      \trueval\\
      \\
      \\
      \uparrow\\
}\, \letter{
      t_{\iterating}\\
      \trueval\\
      \\
      \\
      \\
    }\right) =
\letter{\statebreak\\
      \trueval\\
      \\
      \uparrow\\
      \\
}\, \letter{
      t_{\iterating}\\
      \trueval\\
      \\
      \\
      \\
}\, \letter{
      t_{\iterating}\\
      \trueval\\
      \\
      \\
      \uparrow\\
    }.
\end{equation*}
Note that $\move^{\transitions[\leftarrow]}_{0}$ is not well-defined and
neither is  $\move^{\transitions[\rightarrow]}_{k}$ if $k$ is the last index
of a configuration since there is no index to move the transition pointers to.

\noindent Formally, we say
$\move^{\transitions[\leftarrow]}_{k}(\configuration_{0} \, \ldots \,
\configuration_{n-1}) = \configuration'_{0} \, \ldots \,
\configuration'_{n-1}$ such that $\configuration'_{\ell} =
\configuration_{\ell}$ for all $\ell \in [n] \setminus \set{k-1, k}$. The
values of variables, the current location and loop transitions that are
not \enquote{moved} do not change: $\configuration'_{k-1}(\slot) =
\configuration_{k-1}(\slot)$ and $\configuration'_{k}(\slot) =
\configuration_{k}(\slot)$ for $\slot \in \variables \cup \set{\loc} \cup
\set{i.t : t \in \transitions_{\iterating}\setminus \transitions \text{ and }i
\in [n]}$. Pointers in $\transitions$, however, move to the left (or to the
right if we consider $\move^{\transitions[\rightarrow]}_{k}$ instead):
$\configuration'_{k}(\ell'.t) = \textvisiblespace$ while
$\configuration'_{k-1}(\ell'.t) = \uparrow$ if either
$\configuration_{k-1}(\ell'.t) = \uparrow$ or $\configuration_{k}(\ell'.t) =
\uparrow$ and otherwise $\configuration'_{k-1}(\ell'.t) = \textvisiblespace$
for all $\ell'\in[n]$ and $t \in \transitions$.

\noindent Secondly, $\drop_{k}$ describes how to remove a specific index $k$
from a configuration. Essentially, we remove all slots
$k.\transitions_{\iterating}$ and the column $k$. As an example, consider
\begin{equation*}
  \small
\drop_{3}\left(
\letter{
  \statebreak\\
  \trueval\\
  \\
  \uparrow\\
  \textcolor{red}{\textvisiblespace}\\
}\, \letter{
  t_{\iterating}\\
  \trueval\\
  \\
  \\
  \textcolor{red}{\uparrow}\\
}\, \letter{
  \textcolor{red}{t_{\iterating}}\\
  \textcolor{red}{\trueval}\\
  \textcolor{red}{\textvisiblespace}\\
  \textcolor{red}{\textvisiblespace}\\
  \textcolor{red}{\textvisiblespace}\\
}\right) =
\letter{
  \statebreak\\
  \trueval\\
  \\
  \uparrow\\
}\, \letter{
  t_{\iterating}\\
  \trueval\\
  \\
  \\
}.
\end{equation*}
Hence, $\drop_{k}(\configuration_{0}
\, \ldots \, \configuration_{n-1}) = \configuration'_{0}
\, \ldots \, \configuration'_{n-2}$ such that
\begin{itemize}
  \item for $\ell_{1} < k$, $\ell_{2} < k$, $t \in \transitions_{\iterating}$
    we have $\configuration'_{\ell_{1}}(\ell_{2}.t)
    = \configuration_{\ell_{1}}(\ell_{2}.t)$ and $\configuration'_{\ell_{1}}
    (\slot) = \configuration_{\ell_{1}}(\slot)$ for all $\slot \in \variables
    \cup \set{\loc}$,
  \item for $\ell_{1} < k$, $k \leq \ell_{2}$, $t \in
    \transitions_{\iterating}$ we have $\configuration'_{\ell_{1}}(\ell_{2}.t)
    = \configuration_{\ell_{1}}(\ell_{2}+1.t)$,
  \item for $k \leq \ell_{1}$, $\ell_{2} < k$,
    $t \in \transitions_{\iterating}$ we have
    $\configuration'_{\ell_{1}}(\ell_{2}.t) =
    \configuration_{\ell_{1}+1}(\ell_{2}.t)$ and $\configuration'_{\ell_{1}}
    (\slot) = \configuration_{\ell_{1}+1}(\slot)$ for all $\slot \in
    \variables \cup \set{\loc}$, and
  \item for $k \leq \ell_{1}$, $k \leq \ell_{2}$, $t \in
    \transitions_{\iterating}$ we have $\configuration'_{\ell_{1}}(\ell_{2}.t)
    = \configuration_{\ell_{1}+1}(\ell_{2}+1.t)$.
\end{itemize}
We make a few observations about $\drop_{k}$:
\begin{itemize}
  \item $\drop_{k}$ yields a configuration if no slot of any loop
    transition of agent $k$ contains the value $\uparrow$.
  \item Let $C$ be a configuration and $\trap$ a compatible
    powerword. If $\configuration_{k}(\slot) \notin \trap_{k}(\slot)$ for all
    slots $\slot$ and $\configuration_{\ell}(k.t) \notin \trap_{\ell}(k.t)$
    for all $\ell$ and $t \in \transitions_{\iterating}$ then
    $\drop_{k}(C) \sqcap \drop_{k}(\trap)$ if{}f
    $C \sqcap \trap$.
\end{itemize}
Equipped with these notions we prove Theorem~\ref{thm:main}:
\begin{proof}
  For the sake of contradiction assume that the statement of
  Theorem~\ref{thm:main} is incorrect.
  Then, we can fix a minimal $k_{0}$ such that the normalized word $\ntrap_{0}
  \ldots \ntrap_{i-1}  \, \ntrap_i^{k_{0}} \, \ntrap_{i+1} \ntrap_{N-1}$ gives
  a powerword $\trap$ which is \emph{not} a trap in the instance of
  $N + k_{0} - 1$ agents.

  Therefore, let $C = \configuration_{0} \, \ldots \,
  \configuration_{N-k_{0}-2}$ and $C'= \configuration'_{0} \, \ldots \,
  \configuration'_{N-k_{0}-2}$ be configurations of the instance with $N + k -
  1$ agents such that $C \vdash C'$ and $C \sqcap \trap$, but $C' \notsqcap
  \trap$.

  Now, observe that $k_{0} > 1$ since $k_{0} = 1$ immediately contradicts with
  the prerequisites of the theorem. However, since $k_{0}$ is minimal, $\ntrap'
  = \ntrap_{0} \ldots \ntrap_{i-1}  \, \ntrap_i^{k_{0}-1}  \, \ntrap_{i+1}
  \ntrap_{N-1}$ \emph{is} a normalized trap of $N + k_{0} - 2$ agents. Let
  $\trap'$ be the corresponding powerword -- which is, in fact, a trap. Note
  that $\trap' = \drop_{i}(\trap) = \drop_{i+1}(\trap) = \ldots =
  \drop_{i+k_{0}}(\trap)$.

  Consider the case where $C \vdash C'$ is an
  instance of a local transition $t$. Thus, $C$ and
  $C'$ differ at exactly one index $j$ and there at most in slots
  from $\variables \cup \set{\loc}$. Pick now $m \in \set{i, i+1, i+2, i+3}
  \setminus \set{j}$. W.l.o.g. we assume that $m-1 \in \set{i, i+1, i+2, i+3}$
  (otherwise exchange $m-1$ with $m+1$ and
  $\move_{m}^{\transitions_{\iterating}[\leftarrow]}$ with
  $\move_{m}^{\transitions_{\iterating}[\rightarrow]}$ within the following
  argument). By the definition of instances of local transitions it
  is straightforward to see that
  $\move_{m}^{\transitions_{\iterating}[\leftarrow]}(C) \vdash
  \move_{m}^{\transitions_{\iterating}[\leftarrow]}(C)$. However,
  we still have
  $\move_{m}^{\transitions_{\iterating}[\leftarrow]}(C) \sqcap
  \trap$ but
  $\move_{m}^{\transitions_{\iterating}[\leftarrow]}(C') \notsqcap
  \trap$ since $\trap_{m-1} = \trap_{m}$ and, thus, $\trap_{m-1}(\ell.t) =
  \trap_{m}(\ell.t)$ for all $\ell \in [n + k_{0} - 1]$ and $t \in
  \transitions_{\iterating}$. Since agent $m$ does not contain any $\uparrow$
  values anymore $(\drop_{m} \circ
  \move_{m}^{\transitions_{\iterating}[\leftarrow]})(C)$ and
  $(\drop_{m} \circ
  \move_{m}^{\transitions_{\iterating}[\leftarrow]})(C')$ indeed
  are configurations.

  \noindent Moreover, by instantiating the same local transition
  for agent $j$ or, if $m \leq j$, $j-1$ gives $(\drop_{m} \circ
  \move_{m}^{\transitions_{\iterating}[\leftarrow]})(C) \vdash
  (\drop_{m} \circ
  \move_{m}^{\transitions_{\iterating}[\leftarrow]})(C')$.
  Note that $\trap_{\ell}(m.t) = \emptyset$ for all $t \in
  \transitions_{\iterating}$ since $\ntrap_{m}(\indexslot) = \textvisiblespace$
  and $\configuration_{m}(\slot) \notin \trap_{m}(\slot)$ for all slots $\slot$
  since $\configuration_{m}(\slot) = \configuration'_{m}(\slot)$ because $j
  \neq m$ and $C' \notsqcap \trap$. Thus, by our observations
  about $\drop_{m}$, we have $(\drop_{m} \circ
  \move_{m}^{\transitions_{\iterating}[\leftarrow]})(C) \sqcap
  \trap'$ because $\trap' = \drop_{m}(\trap)$, but $(\drop_{m} \circ
  \move_{m}^{\transitions_{\iterating}[\leftarrow]})(C') \notsqcap
  \trap'$ contradicting that $\trap'$ is a trap.

  Consider the case that $C \vdash C'$ is an
  instance of a loop transition $t$. Then there are indices $j$ and $p$
  such that $j$ is the agent executing the loop transition while $p$ is
  the agent that $j$ currently inspects. If the loop transition fails, $p$
  changes from $C$ to $C'$ only in slot $j.t$ from
  $\uparrow$ to $\textvisiblespace$ while $j$ changes from $C$ to
  $C'$ only in slot $\loc$ from $t$ to $\target_{\fail}$. In this
  case pick $m \in \set{i, i+1, i+2, i+3} \setminus \set{j, p}$. Again, we
  assume $m-1$ in $\set{i, i+1, i+2, i+3}$.
  Using $\trap_{i} = \trap_{i+1} = \trap_{i+2} = \trap_{i+3}$
  and, $m$ not being involved in this instance of $t$ allows us to consider
  $D = \move_{m}^{\transitions_{\iterating}[\leftarrow]}(C) \vdash
  \move_{m}^{\transitions_{\iterating}[\leftarrow]}(C') = D'$
  instead. Exploiting the same principles above -- namely, the fact that the
  $m.t$ slot is empty for every letter in $\trap$ and the agent $m$ not
  changing from $D$ to $D'$ -- allows us to deduce that $\drop_{m}(D) \sqcap
  \trap'$ while $\drop_{m}(D') \notsqcap \trap'$ which contradicts the
  assumption of $k_{0}$ being minimal.

  It remains to consider the case that this transition is an instance of $t$
  for inspector $j$ and inspectee $p$ which is successful. If $p$ is the last
  agent, then $C$ and $C'$ only differ at indices $j$
  and $p$. Hence, the pattern of the before mentioned cases applies again.

  \noindent Otherwise, only agents $j$, $p$, and $p+1$ change. Then, however,
  there is $m \in \set{i, i+1, i+2, i+3} \setminus \set{j, p, p+1}$. Repeating
  the same arguments as above we can show that $\trap'$ cannot be a trap since
  $(\drop_{m} \circ \move_{m}^{\transitions_{\iterating}[\leftarrow]})
  (C) \vdash (\drop_{m} \circ
  \move_{m}^{\transitions_{\iterating}[\leftarrow]})(C')$ is an
  instance of a transition for the instance of size $N + k_{0} -2$.
\end{proof}

In the next section we show how to use a first-order theorem prover to check
whether every configuration of every instance satisfying the invariants of a
family of trap languages also satisfies a given safety property.

\subsection{Proving safety properties}
\label{sec:proving-safety-properties}
Recall that in \cite{ERW21} we consider parameterized systems without loop
transitions, i.e., having only local transitions and the transitions described
in  Remark \ref{rem:loctrans}. For these systems, the problem whether every
global configuration satisfying a family of trap languages also satisfies a
desired safety property can be reduced to the satisfiability
problem for \emph{WS1S}, which is decidable (see also
\cite{BozgaEISW20,BozgaIS21}). Unfortunately, this is no longer the case for
systems with loop transitions.

Let $\system = \tuple{\states, q_{0}, \initial, \transitions_{\local},
\transitions_{\iterating}}$ be a parameterized system. For every $q \in
\states$, let $\configurationset_q$ denote the set of global configurations of
all instances of $\system$ such that at least one agent is in state $q$.
Similarly, for every variable $x$ and every value $v$ of $\values_{x}$, let
$\configurationset_{v,x}$ be the set of global configurations of all instances
of $\system$ such that for at least one agent the variable $x$ has value $v$.
A \emph{safety property} $\safetyprop(C)$ is a Boolean combination of the
predicates $C \in \configurationset_q$ and $C \in
\configurationset_{v,x}$.  
An example safety property could be 
``there is no agent in the state $critical$,
or there is an agent in state $loop$ and no agent such that $b=\top$''.
Let  $\traplang=\set{\traplang_1, \ldots,
\traplang_k}$ be a finite set of trap languages. Abusing language, let
$\traplang(C)$ denote the predicate that holds for a global configuration $C$
if $C \sqcap O$ for every trap $O \in \traplang$ compatible with $C$.

Given $\system, \safetyprop, \traplang$, we want to decide if every
configuration $C$ satisfying $\traplang(C)$ also satisfies $\safetyprop(C)$,
i.e., whether the invariants of $\traplang$ are sufficient to prove the safety
property $\safetyprop$. Since $\traplang(C)$ is an inductive predicate (i.e.,
$\traplang(C)$ and $C \vdash C'$ imply $\traplang(C')$), this is the case if
(1) $\safetyprop$ holds for every initial configuration  of every instance of
$\system$, and (2) $\traplang(C) \wedge \safetyprop(C) \wedge C \vdash C'$
implies $\safetyprop(C')$. This leads to the following definition:

\begin{definition}
  Let $\system, \traplang,\safetyprop$ be a parameterized system, a trap
  language, and a safety property, respectively. The \emph{inductivity problem}
  consists of deciding if $\; (\traplang(C) \wedge \safetyprop(C) \wedge C
  \vdash C') \rightarrow \safetyprop(C') \; $ holds for every two
  configurations $C, C'$ of $\system$.
\end{definition}

\begin{restatable}{theorem}{undecidableSatisfiability}
  \label{thm:undecidableSatisfiability}
  The inductivity problem is undecidable.
\end{restatable}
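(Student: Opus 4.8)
The natural route to undecidability here is a reduction from a well-known undecidable problem to the inductivity problem. Since the formal model involves agents looping over all other agents and inspecting their local states, and since safety properties can refer to states and variable values, the model is expressive enough to simulate Turing-complete computation. The cleanest target for such a reduction is the halting problem for Turing machines, or equivalently—given the paper's flavor and the available tiling macros (\tile, \lastrow, \firstrow, \badflag, etc.)—the undecidability of whether a Turing machine or a tiling system admits a valid computation/tiling. The plan is to encode configurations of a Turing machine (or a grid of tiles) as global configurations of an instance of a parameterized system, using the array of agents as the tape (or one row of the grid), and using loop transitions to let an agent scan the entire tape and verify a global consistency condition non-atomically.

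**Key steps.** First I would fix an undecidable problem $\Pi$; the tiling/Turing encoding suggested by the macro names points to a reduction from a problem like "does a given tiling system tile an $\mathbb{N} \times \mathbb{N}$ grid" or the halting problem. Second, given an instance of $\Pi$, I would construct a parameterized system $\system$ whose agents encode tape cells (or tile positions), with local transitions writing symbols and loop transitions checking, cell by cell, that a claimed successor configuration is a legal TM step (or that adjacent tiles match). The non-atomicity of loop transitions is exactly what lets one agent verify a global property that ranges over all $N$ agents. Third, I would pick the trap language $\traplang$ and the safety property $\safetyprop$ so that the inductivity question $(\traplang(C) \wedge \safetyprop(C) \wedge C \vdash C') \rightarrow \safetyprop(C')$ fails precisely when $\Pi$ has a solution: $\safetyprop$ should assert "the computation has not reached a halting/bad configuration," and I would engineer $\traplang$ so that every configuration reachable along a valid computation satisfies $\traplang$, so that a violation of inductivity corresponds exactly to a successful (halting) run. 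The simplest design takes $\traplang$ trivial (or chosen so that $\traplang(C)$ holds for all configurations of interest) and lets all the encoding work be done by $\system$ and $\safetyprop$, so that the inductivity problem collapses to asking whether any single step can move from a safe to an unsafe configuration—which happens iff the machine can make its final, accepting move.

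**Main obstacle.** The hard part will be making the reduction faithful despite two features of the model. First, the instance size $N$ is fixed per instance but arbitrary across the family: I must ensure the encoding is uniform in $N$ so that the undecidable instance is captured by the quantification "for every two configurations $C, C'$ of $\system$" (which ranges over all instance sizes) rather than by any single fixed $N$. Concretely, the tape length is unbounded, and the family of instances supplies arbitrarily long tapes, so I must argue that a halting computation using a tape of some length $N$ yields an inductivity violation in the size-$N$ instance, and conversely. Second, and more delicate, the non-atomicity of loop transitions means that while an inspector scans the tape, other agents may change their local state; I must design the encoding so that the verification performed by a loop transition is still sound—e.g., by freezing the relevant agents into states they cannot leave during the scan, or by arranging that any interfering move is itself caught as a safety violation. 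Getting this interplay right, so that a \emph{single} transition $C \vdash C'$ witnesses the undecidable event, is where the real care lies; the rest of the construction (the local transitions implementing symbol rewriting and the Boolean structure of $\safetyprop$) is routine.
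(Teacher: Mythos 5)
Your high-level instinct (reduce from a tiling-style problem, exploit the loop pointers as a second relation on agents) points in the right direction, but the concrete plan has a gap that I think is fatal: you propose to make $\traplang$ trivial and let $\system$ and $\safetyprop$ carry the whole encoding. In this problem $\safetyprop$ is by definition only a Boolean combination of the existential predicates \enquote{some agent is in state $q$} and \enquote{some agent has $x=v$}; it cannot express any positional or relational constraint between agents, so it cannot force $C$ to be a well-formed encoding of a tape or a grid. Moreover, the inductivity check quantifies over \emph{all} pairs $C \vdash C'$ with $C$ merely trap-intersecting and safe --- not over reachable configurations --- so no multi-step simulation is available: the entire reduction must be witnessed by a single transition fired from an essentially arbitrary configuration. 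With $\traplang$ trivial, the only configurations you can exclude are those ruled out by an existential Boolean combination, and the residual question (\enquote{does some safe configuration have an unsafe one-step successor?}) has no visible source of undecidability; you give no argument that it is undecidable, and nothing in your construction supplies one.

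The paper's proof places the entire combinatorial burden on the trap languages, which is exactly the mechanism you discard. It reduces from the existence of a \emph{periodic} tiling of Wang tiles; periodicity matters because every instance is finite and the grid must wrap around into a torus, so the source problem must have finite witnesses of unbounded size (your suggested $\mathbb{N}\times\mathbb{N}$ tiling problem has no finite witnesses and cannot be used this way, while the halting problem would at least have to be recast in terms of finite halting computations). The system itself is nearly trivial --- one loop transition with an unsatisfiable guard and one local transition that sets a flag --- and the relevant configurations are \emph{static} encodings, not running computations, so the non-atomicity issue you worry about never arises. The two-dimensional structure comes from combining the linear order on agent indices with the function \enquote{agent $i$ currently points at agent $j$} induced by the frozen loop transition; a carefully chosen family of trap languages forces this function to be addition of a constant modulo $n-1$, marks the borders of the grid, and checks tile compatibility, with an escape state $q_\bot$ absorbing every configuration that violates the intended structure. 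Inductivity then fails if and only if a configuration encoding a valid periodic tiling exists, i.e.\ if and only if the flag-setting transition can fire from a safe, trap-intersecting configuration. To salvage your approach you would need to (i) move the structural constraints from $\safetyprop$ into $\traplang$, (ii) switch to a source problem with finite witnesses such as periodic tiling, and (iii) abandon any attempt at multi-step simulation.
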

\begin{proof}[Sketch]
  Any loop transition $t$ implicitly provides a reference from agent
  to agent. We can use both this and the order of agents to enforce grid
  structures within instances. This allows us to give a reduction of the
  problem whether a periodic tiling for a given set of Wang tiles exists to the
  question above. Then, the result follows from the undecidability of finding
  periodic tilings for Wang tiles \cite{DBLP:journals/tcs/Jeandel10}. For
        the formal details refer to Appendix~B{ }
        of the full version~\cite{esparza2021abduction}.
\end{proof}

Due to this result, there is no hope of reducing the safety problem for systems
with loop transitions to the satisfiability problem of a decidable logic. It is
still possible, though, to reduce it to the satisfiability problem of
first-order logic with monadic predicates and one function symbol, and run it
through an automatic first-order theorem prover.  Informally, we must embed
the statement $$\forall C, C'. (\traplang(C) \wedge \safetyprop(C) \wedge C \vdash C') \rightarrow \safetyprop(C')$$
\noindent in first-order logic.

\paragraph{Embedding into first-order logic.}
In \cite{ERW21} we consider configurations that are solely defined by the
values of the local variables and the state of a process. This can be encoded
by a set of monadic second-order variables; namely, we introduce a monadic
second-order variable $\bm{X}^{\var}_{\val}$ for all values $\val$ of all
$\var \in \variables$.

\noindent In the presence of loop transitions, the local configuration also
includes the pointer, whose value ranges from $0$ to $N-1$, where $N$ is the
number of processes. For this we introduce one function symbol $f$. This
function symbol is interpreted by maps from $\set{0, \ldots, N-1}$ to $\set{0,
\ldots, N-1}$. Combining this, we describe a configuration $C$ as
monadic predicates $\bm{X}^{\var}_{\val}$ for every $\val \in \values_{\var}$
for all $\var \in \variables$, $\bm{X}_{q}$ for every $q \in \states$,
$\bm{X}_{t}$ for every $t \in \transitions_{\iterating}$ and a unary function
symbol $f$. It is obvious, however, that not every interpretation of these
monadic predicates and the function symbol $f$ properly describes a
configuration. Only models which ensure that every variable is currently in
exactly one value for every agent, and that every agent is currently in exactly
one location can be considered proper. For this, we introduce the formula
$\eta$ which ensures these properties. Namely, we define
\begin{equation*}
  \small
  \eta = \forall i:
  \left(\begin{aligned}
    &\bigwedge_{\var \in \variables} \bigvee_{\val \in \values}
      \bm{X}^{\var}_{\val}(i) \land \bigwedge_{\val' \in \values \setminus
    \set{\val}} \lnot \bm{X}^{\var}_{\val}(i)\\
    \land & \bigvee_{\val \in Q \cup \transitions_{\iterating}}
      \bm{X}_{\val}(i) \land \bigwedge_{\val' \in Q \cup
    \transitions_{\iterating} \setminus \set{\val}} \lnot \bm{X}_{\val}(i)\\
  \end{aligned}
  \right).
\end{equation*}
$\eta$ enforces that every agent currently is in exactly one location and
sets every variable to exactly one value. We introduce notions which leverage
this observation to increase readability of formulas. Namely, we write
$\var(i) = \val$ and $\loc(i) = q$ ($\loc(i) = t$) to express
$\bm{X}^{\var}_{\val}(i)$ and $\bm{X}_{q}(i)$ ($\bm{X}_{t}(i)$) respectively.
Moreover, we introduce $\var(i) = \var(j)$ as a short form of
$\bigwedge_{\val \in \values_{\var}} \bm{X}^{\var}_{\val}(i) \leftrightarrow
\bm{X}^{\var}_{\val}(j)$ and, as its dual, $\var(i) \neq \var(j)$ expresses
$\lnot (\var(i) = \var(j))$.

To encode the inductivity problem it is necessary to relate two
configurations $C$ and $C'$ such that $C
\vdash C'$. For this, we introduce a primed version of all monadic
predicates and the function symbol $f$ which we use to encode a configuration
to represent $C'$. Additionally, we axiomatize a total linear
order $\leq$ between the constant symbols $0$ and $(N-1)$ in a formula $\psi$.
Since $\psi$'s definition is fairly standard we omit it here for brevity.
Without increasing the expressiveness of the theory of this total linear order,
we introduce $t < t'$ to express $t \neq t' \land t \leq t'$ and $t+1$ as a
term to refer to the immediate successor of $t$ w.r.t. $\leq$ (this requires
that $t$ cannot be interpreted with $N-1$ which is ensured for all
occurrences). As $\psi$, the encoding of $C \vdash
C'$ is now fairly standard. We postulate the existence of a
formula $\tau$ which relates the predicates and the function symbols $f$ and
$f'$ of $C$ and $C'$. One clause in $\tau$, for
example, which encodes the loop transition $t_{\iterating}$ of
Example~\ref{ex:running-example-spec} (cp. Figure~\ref{fig:step}) is
\begin{equation*}
  \begin{aligned}
    &\forall \ell: (\var(\ell) = \var'(\ell))\\
    \land &\exists i: \loc(i) = t\\
    &\land \forall k \neq i: (\loc(k) = \loc'(k) \land f(k) = f'(k))\\
    &\land \left(\begin{aligned}
      (f(i) \neq i \land \var(f(i)) = \trueval) &\rightarrow \loc'(i) =
      \statebreak\\
      \land((f(i) = i \lor \var(f(i)) = \falseval)\land i < (N-1)) &\rightarrow
      \loc'(i) = t \land f'(i) = f(i)+1\\
      \land((f(i) = i \lor \var(f(i)) = \falseval)\land i = (N-1)) &\rightarrow
      \loc'(i) = \statecritical\\
    \end{aligned}\right).
  \end{aligned}
\end{equation*}

Secondly, the inductivity problem relies on only considering configurations
$C$ which intersect \emph{all} compatible normalized traps of the
computed trap languages. The trap language at the bottom of
Figure~\ref{fig:partraps} describes one of these trap languages.
Careful analysis of Example~\ref{ex:running-example-spec}, however, reveals a
stronger invariant\footnote{In fact, this is one of the invariant languages
that our tool \heron{} computes for this example.}. Namely, that all words in
\begin{equation}
  \label{eq:example-heron-language}
  \scalebox{0.8}{\begin{tabular}{ccccccc}
    $\ali{
      \indexslot \\
      \mathit{loc} \\
      \mathit{var}\\
       p_{0}.t_{\iterating} \\
       p_{1}.t_{\iterating} \\
    }$ & \qquad &
    $\letter{
      \textvisiblespace\\
      \emptyset\\
      \emptyset\\
      \set{\uparrow}\\
      \set{\uparrow}\\
    }^{\Huge{*}}$ &
    $\letter{
      p_{0}\\
      \set{\statebreak, \stateloop}\\
      \set{\falseval}\\
      \set{\uparrow}\\
      \set{\uparrow}\\
    }$ &
    $\letter{
      \textvisiblespace\\
      \emptyset\\
      \emptyset\\
      \set{\uparrow}\\
      \emptyset\\
    }^{\Huge{*}}$ &
    $\letter{
      p_{1}\\
      \set{\statebreak, \stateloop}\\
      \set{\falseval}\\
      \set{\uparrow}\\
      \emptyset\\
    }$  &
    $\letter{
      \textvisiblespace\\
      \emptyset\\
      \emptyset\\
      \emptyset\\
      \emptyset\\
    }^{\Huge{*}}$
  \end{tabular}}
\end{equation}
are normalized traps in instances of the corresponding size. On this basis, we
can derive a formula $\Phi = \eta \land \psi \land \varphi$ whose models are
the global configurations $C$ that intersect \emph{all} traps of this language,
and are compatible with $C$. We already discussed the nature of $\eta$ and
$\psi$. It remains to introduce $\varphi$ which restricts models of $\Phi$ to
configurations which intersect every compatible trap of the language.
The traps of the language are characterized by the positions of the abstract
names $p_0$ and $p_1$. So $\varphi$ is of the form \enquote{for every two
indices $0\leq p_0 < p_1 < N$, the (unique) trap of the language
corresponding to these values of $p_0$ and $p_1$ intersects $C$}.
The formula splits into several cases corresponding to having an intersection
of $C$ and the considered trap at some index $i < p_0$, at $p_0$,
between $p_{0}$ and $p_{1}$, or at $p_{1}$. Consider the case where an
intersection occurs at some index $i < p_0$: since the letters to the left of
the letter with index $p_0$ are of the form
$(\textvisiblespace, \emptyset, \emptyset, \set{\uparrow}, \set{\uparrow})$,
this is the case if one of the agents $p_0$ or $p_1$ is currently executing its
loop transition, and its pointer points to $i$. This is captured by the formula
$t(p_{0}) = i$ or $t(p_{1}) = i$. Proceeding like this for the other cases
gives

\begin{equation*}
  \small
  \varphi = \forall p_{0} < p_{1}~:\left(
    \begin{aligned}
      &\exists i < p_{0}~.~t(p_{0}) = i \lor t(p_{1}) = i\\
      \lor &\statevar(p_{0}) = \statebreak \lor \statevar(p_{0}) = \stateloop
      \lor \var(p_{0}) = \falseval \lor t(p_{0}) = p_{0} \lor t(p_{1}) = p_{0}\\
      \lor &\exists p_{0} < i < p_{1}~.~t(p_{0}) = i\\
      \lor &\statevar(p_{1}) = \statebreak \lor \statevar(p_{1}) = \stateloop
      \lor \var(p_{1}) = \falseval \lor t(p_{0}) = p_{1}
    \end{aligned}
    \right).
\end{equation*}
Recall from Definition~\ref{def:norm} that trap languages are defined as
regular expressions of the form $r_0 r_1\ldots r_{\ell-1}$ where every $r_{i}$
is either a single letter; i.e., some $a$, or the arbitrary repetition of a
single letter; i.e., the expression $a^{*}$. Hence, it is straightforward to
define $\varphi$ for a finite collection of trap languages by generalizing the
example above. Again, we avoid giving this definition for brevity.

The inductiveness problem corresponds now to the question whether
\begin{equation*}
  (\Phi \land \tau \land \safetyprop) \rightarrow \safetyprop'
\end{equation*}
is valid where $\safetyprop$ and $\safetyprop'$ are formalizations of the
desired safety property; e.g., $\lnot (\exists i \neq j: \loc(i) =
\statecritical \land \loc(j) \neq \statecritical)$ and $\lnot (\exists i \neq
j: \loc'(i) = \statecritical \land \loc'(j) \neq \statecritical)$ for
Example~\ref{ex:running-example-spec}.

\section{Experimental results}
We implemented the approach we describe in this paper in our tool \heron{}
\cite{heron-gitlab,heron-artifact} which is written in the Python
programming language. We implemented Algorithm~\ref{alg:proving-instances}
using \texttt{clingo} \cite{DBLP:journals/aicom/GebserKKOSS11} and relied for
first-order theorem proving on \textsc{VAMPIRE} \cite{DBLP:conf/cav/KovacsV13}
and \textsc{CVC4} \cite{DBLP:conf/cav/BarrettCDHJKRT11}. Essentially, \heron{}
runs Algorithm~\ref{alg:proving-instances} and normalizes the occurring traps.
In these normalized traps, it identifies repetitions which can be safely
generalized via Theorem~\ref{thm:main}. Additionally, we try to obtain succinct
invariants. That is, languages with small regular expressions. As an example,
we consider the language from Equation~(\ref{eq:example-heron-language}) as
succinct. To this end, once \heron{} encounters a repetition of letters in a
normalized traps it proceeds to shrink and expand this repetition and checks
whether the resulting normalized traps correspond to actual traps. If a
repetition can be expanded to the generalization threshold it can be safely
repeated arbitrarily often. This heuristic is relatively cheap since it simply
relies on checking whether a normalized trap is indeed a trap but allows us to
easily compute succinct but general trap languages.

\paragraph{Examples}
As a benchmark we used classical examples of mutual exclusion algorithms.
Namely, the mutual exclusion algorithm of Dijkstra \cite{Dijkstra2002} (here we
prove the reduced version as described in Example~\ref{ex:running-example-spec}
as well as a precise formulation), Knuth \cite{DBLP:journals/cacm/Knuth66},
de Bruijn \cite{DBLP:journals/cacm/Bruijn67}, and Eisenberg \& McGuire
\cite{DBLP:journals/cacm/EisenbergM72}. For all these examples we can prove
fully automatically that they ensure the mutual exclusion property for their
respective critical sections. To the best of our knowledge these are the first
fully automatic proofs for the algorithms of Knuth, de Bruijn, and Eisenberg \&
McGuire for a model with non-atomic global checks.

However, all but Example~\ref{ex:running-example-spec} require the use of
global pointers to processes; i.e., we introduce a variable with some value
in $[N]$ for every instance of size $N$ and processes can check the current
value of local variables of the process this pointer currently refers to, set
their own identifier as the current value of the pointer, or upon encountering
a certain state in an agent during an iteration setting the pointer to this
agent. We adapt our approach to account for this: for every pointer we
introduce an additional slot with a corresponding alphabet of $\set{\uparrow,
\textvisiblespace}$ and maintain the invariant that every configuration
contains exactly one agent for which a pointer slot holds the value $\uparrow$.
This makes it necessary to broaden the scope of Theorem~\ref{thm:main}.
Essentially, we obtain now that every transition might interact with all
agents a pointer currently points to. Careful examination of the proof of
Theorem~\ref{thm:main} suggests that we mainly rely on the fact that for every
transition only a finite amount of agents are essential. Adding pointer
variables might increase the amount of agents that are considered by a
transition. This is immaterial for the spirit of Theorem~\ref{thm:main} though,
but increases the amount of repetitions necessary to allow for safe
generalization. Fortunately, as long as the necessary invariants happen to be
simple, the direct performance impact is limited 
to checking a few more easily obtained traps for finite instances.
In Appendix~C{ }
of the full version~\cite{esparza2021abduction}
we shortly
discuss the formalization of this observation.

\paragraph{Results}
The details of our experiments can be found in Figure~\ref{fig:results}.
We report in the first column the name of the considered algorithm. The second
column contains the time it took to prove the mutual exclusion property for
this system. In the third column we give the maximal size for which we
instantiated and analyzed the system. The fourth column reports the amount of
traps that were computed for the various instances of the system while the
fifth column gives the amount of trap languages that were actually necessary to
establish the mutual exclusion property. \heron{} axiomatizes for the
inductivity problem a minimal size of the considered configurations which
exceeds the already analyzed instances. This axiom, however, is not used for
the proof of all examples; i.e., all inductivity queries can be solved without
this axiom in comparable time. This means that the reported trap languages
indeed are sufficient to establish the mutual exclusion property \emph{for all
instances}. In the sixth column we report the maximum amount of abstract
indices that occur in any of the trap languages. Finally, the last column
contains the maximum time it took to solve the final first-order problems for
the various transitions. We observe that the proof search does not dominate the
running time, the real bottleneck being the analysis of the finite instances.

\begin{figure}
  \caption{The experimental results for our tool \heron{}.}
  \label{fig:results}
  \begin{center}
    \scalebox{0.9}{
      \begin{tabular}{l||cccccc}
        Algorithm & time (s) & max. N & \# traps & \# trap languages
        & \begin{tabular}{c}max.\\ \# indices\end{tabular}
        & \begin{tabular}{c}max. proving \\ time (s)\end{tabular}
          \\\hline
        Example~\ref{ex:running-example-spec} & 11 & 8 & 36 & 2 & 2 & 1 \\
        Dijkstra's & 22 & 8 & 75 & 5 & 2 & 1 \\
        Knuth's & 194 & 8 & 160 & 7 & 2 & 1 \\
        de Bruijn's & 76 & 8 & 164 & 6 & 2 & 1 \\
        Eisenberg \& McGuire's & 1055 & 9 & 126 & 6 & 2 & 1\\
      \end{tabular}}
  \end{center}
\end{figure}

The information of Figure~\ref{fig:results} are extracted from log files that
\heron{} produces on its execution. We want to stress specifically that
\heron{} does not only automatically prove the desired property but also gives
a detailed and human-readable explanation. This explanation presents itself in
form of the computed traps for considered finite instances as well as the
computed trap languages which suffice to prove the inductivity problem for all
instances which are larger then the analyzed finite instances. These trap
languages can be read and understood by humans, and are for all considered
examples sufficient to prove inductivity of the mutual exclusion property for
all instances of the parameterized system. As a complete example, consider the
language in Equation~(\ref{eq:example-heron-language}) and
\begin{equation}
  \scalebox{0.8}{\begin{tabular}{ccccc}
    $\ali{
      \mathit{index} \\
      \mathit{loc} \\
      \mathit{var}\\
    }$ & \qquad &
    $\letter{
      \textvisiblespace\\
      \emptyset\\
      \emptyset\\
    }^{\Huge{*}}$ &
    $\letter{
      \textvisiblespace\\
      \set{\stateinitial}\\
      \set{\trueval}\\
    }$ &
    $\letter{
      \textvisiblespace\\
      \emptyset\\
      \emptyset\\
    }^{\Huge{*}}$
  \end{tabular}}
\end{equation}
which are the only languages \heron{} computes to prove
Example~\ref{ex:running-example-spec}. Additionally, \heron{} gives a series of
first-order problems in the widely adopted \texttt{TPTP} format \cite{Sut17}
which describe the inductiveness checks of the proven property. This allows a
user to leverage the advanced tool support for first-order theorem proving
\cite{DBLP:conf/cav/BarrettCDHJKRT11,DBLP:conf/ifm/GleissKS19,DBLP:conf/cade/Reger16}
to verify the proof and, simultaneously, better understand the analyzed system.

In fact, the data of Figure~\ref{fig:results} shows that mutual exclusiveness
in the critical section can be established with very few invariants (e.g., in
less than $7$ trap languages). Moreover, all these languages of normalized
traps need at most two references to indices. Hence, the invariants express --
in some sense -- \emph{local} properties. Finally, we want to draw attention to
the fact that first-order theorem proving is extremely efficient once all
necessary invariants are established. This suggests that the presented approach
is practically viable despite Theorem~\ref{thm:undecidableSatisfiability}.

\section{Conclusion}
In our previous work, we showed the value of structural invariants of
parameterized systems for their analysis \cite{BozgaEISW20}. This contribution
is the natural expansion of our abduction principles from \cite{ERW21} to
parameterized systems with non-atomic global checks. Although one sacrifices
the decidability of the inductivity problem in this expansion, the experimental
data suggests that it remains practically viable. This contrasts observations
from \cite{BozgaEISW20,ERW21} where the inductivity problem is decidable albeit
it often fails in practice due to its computational complexity.
Moreover, most of the time is spent on the analysis of the finite instances,
while generalisations of suitable finite invariants can be verified very
quickly. We therefore conclude that this expansion is worthy of consideration
and, in a broader view, establishes analysis of parameterized systems via
abduction of structural invariants as a promising direction. Moreover, we
believe that our work complements existing approaches like \emph{view
abstraction} \cite{AbdullaHH16}. Especially, since the invariants \heron{}
computes for Dijkstra's algorithm can be similarly expressed in the formalism
of view abstraction. However, \heron{} obtains these languages from structural
analysis of instances while in view abstraction one abstracts from the
reachable set of configurations for fixed sizes. Also, \heron{} computes
comparatively very few invariants but needs \emph{significantly more} time to
do so. Thus, future work can focus on improving the analysis of finite
instances via structural properties. Moreover, it is tempting to further
explore the trade-offs between how strong an analysis is and how easy the
considered structural properties can be abducted to the general case. In
particular, a richer set of considered invariants should allow more diverse
communication structures between the agents. Additionally, one can consider
applying abduction principles to proofs of liveness properties in, e.g., Petri
nets \cite{EsparzaM15}.

\nocite{*}
\bibliographystyle{eptcs}
\bibliography{refs.bib}

\appendix

\section{Formal definition of the semantics}
\label{app:semantics}
Here we describe the formal definition of the semantics of instances in our
parameterized system. For this, let $\configuration = \configuration_{0} \ldots
\configuration_{n-1}$, and $\configuration' = \configuration'_{0} \ldots
\configuration'_{n-1}$ be configurations. Additionally, we write
$\configuration_{i}[s \mapsto v]$ for $s \in \slots$ and $v \in \Sigma_{s}$ to
denote a letter equivalent to $\configuration_{i}$ but where slot $s$ now
contains the value $v$.

\paragraph{Local transition}
Let $t$ be a local transition as in Eq.~(\ref{eq:local-transition}). Now fix
any $i \in [n]$ such that $C_{i}(\loc) = \origin$. Then $\configuration
\vdash \configuration'$ if $\configuration_{j} = \configuration'_{j}$ for all
$j \in [n] \setminus \set{i}$ and $\configuration_{i}' =
\configuration_{i}[\var_{1} \mapsto \val_{1}] \ldots [\var_{k} \mapsto
\val_{k}][\loc \mapsto \target]$.

\paragraph{Iterating transition}
Let $t$ be an iterating transition as in Eq.~(\ref{eq:iterating-transition}).
Again, fix any index $i \in [n]$. We distinguish a few different cases:
\begin{itemize}
  \item If $\configuration_{i}(\loc) = \origin$ then
    agent $i$ just starts the iteration of $t$. Consequently, we say
    $\configuration \vdash \configuration'$ if $\configuration_{0}' =
    \configuration_{0}[t.i \mapsto \uparrow]$, $\configuration_{j}' =
    \configuration_{j}[t.i \mapsto \textvisiblespace]$ for all $j \in [n]
    \setminus{0, i}$ and $\configuration_{i}' = \configuration[i][\loc \mapsto
    t]$. Note, that for the special case that $i = 0$ we require
    $\configuration_{0}' = \configuration_{0}[t.i \mapsto \uparrow][\loc
    \mapsto t]$ and $\configuration_{j}' = \configuration_{j}$ for all
    $j \in [n]\setminus\set{0}$ instead.
  \item If $\configuration_{i}(\loc) = t$ then agent
    $i$ already executes the transition $t$. Thus, there exists exactly one
    index $j$ such that $\configuration_{j}(t.i) = \uparrow$. We
    distinguish cases on the value of $j$:
    \begin{itemize}
      \item If $j = n-1$ then agent $i$ changes its location. Either to
        $\target_{\success}$ or $\target_{\fail}$; depending on whether agent
        $j$ satisfies the condition $\varphi$. Hence, $\configuration_{i}' =
        \configuration_{i}[\loc \mapsto \target_{\success}]$ if
        $\configuration_{j}\models \varphi$. Additionally, $\configuration_{j}'
        = \configuration_{j}[t.i \mapsto \textvisiblespace]$. Note that, if
        $j = i$, then predicates $\self$ are interpreted as true otherwise they
        are interpreted as false. In any way, $\configuration_{k}' =
        \configuration_{k}'$ for all $k\in [n]\setminus\set{i, j}$.
        Analogously, we have $\configuration_{i}' =
        \configuration_{i}[\loc \mapsto \target_{\fail}]$ and
        $\configuration_{j}' = \configuration_{j}[t.i \mapsto
        \textvisiblespace]$ if $\configuration_{j}\not\models \varphi$.
      \item If $i \neq j < n-1$ then we distinguish whether agent $j$
        satisfies the condition $\varphi$: if $\configuration_{j}\models
        \varphi$, then $\configuration_{j}' = \configuration_{j}[t.i \mapsto
        \textvisiblespace]$ and $\configuration_{j+1}[t.i \mapsto
        \uparrow]$ while $\configuration_{k}' = \configuration_{k}$ for all
        $k\in [n]\setminus\set{j, j+1}$. On the other hand; i.e., if
        $\configuration_{j}\not\models \varphi$, we have $\configuration_{i}' =
        \configuration_{i}[\loc \mapsto \target_{\fail}]$ and
        $\configuration_{j}' = \configuration_{j}[t.i \mapsto
        \textvisiblespace]$ while $\configuration_{k}' = \configuration_{k}$
        for all $k\in [n]\setminus\set{i, j}$.
    \end{itemize}
\end{itemize}

\section{Proof of Theorem~\ref{thm:undecidableSatisfiability}}
\label{app:undecidability-proof}
\undecidableSatisfiability*

\begin{proof}
  The proof relies on a reduction from existence of a periodic tiling, using
  agent adjacency and iteration progress to model two directions on the grid.

  \begin{definition}[Periodic tiling of Wang tiles]
    Let $C$ be a finite set of colors. A Wang tiles is a mapping
    $\tau\colon\set{N, E, S, W} \to C$. A finite set of Wang tiles is called a
    tile set.

    A periodic tiling of a tile set $\mathbb{T}$ is a function $f \colon [n]
    \times [m] \rightarrow \mathbb{T}$ for some $n$, $m$ such that
    $f(i, j)(E) = f(i, j\oplus_{m}1)(W)$ and $f(i, j)(N) = f(i \oplus_{n} 1,
    j)(S)$.
  \end{definition}

  We choose $\system$ to have the variables $\variables = \set{\badflag,
  \firstrow, \firstcol, \lastrow, \lastcol, \tile}$ where $\badflag$,
  $\firstrow$, $\firstcol$, $\lastrow$, and $\lastcol$ are Boolean variables;
  i.e., the values for these variables are $\trueval$ and $\falseval$.
  Finally, we set $\values_{\tile}$ to $\mathbb{T}$.

  Now, we consider $\system$ of two transitions: one loop transition
  $t = \tuple{q, \bot, q_{\bot}, q_{\bot}}$ where the condition $\bot$ is a
  logically unsatisfiable statement; e.g., $\badflag = \trueval \land \badflag
  = \falseval$, and one local transition $\tuple{q_l, \badflag = \trueval,
  q_l}$.

  First we construct a set of trap languages such that a configuration without
  any agents with state $q_\bot$ marking all the corresponding traps has to
  encode a correct tiling with Wang tiles. For technical reasons, we will
  encode the tiling without using the agent with the index $0$, which will have
  state $q_l$. We consider the indices $1, 2, \ldots, n-1$ to all currently
  execute the loop transition $t$. We consider the relationship that agent $i$
  inspects agent $j$ next in $i$-th execution of $t$ to be a mapping of $i$ to
  $j$. We have now access to two relations between agents: their natural order
  and the mapping induced by $t$. We use these two relations to construct a
  grid of the following form:
  \begin{center}
    \scalebox{0.4}{
      \begin{tikzpicture}[
          node distance=3cm,
          agent/.style={
            rectangle,
            draw,
            inner sep=0.3cm,
            minimum width=2.5cm
          }
        ]
        \node[agent] (1) {$1$};
        \node[right=of 1, agent] (2) {$2$};
        \node[right=of 2, agent] (3) {$3$};
        \node[right=of 3] (4) {$\ldots$};
        \node[right=of 4, agent] (5) {$k$};
        \node[right=of 5, agent] (6) {$k+1$};
        \node[right=of 6] (7) {$\ldots$};
        \node[right=of 7, agent] (8) {$\ell$};

        \node[above=of 1, agent] (l1) {$\ell+1$};
        \node[agent] (l2) at (l1-|2) {$\ell+2$};
        \node[agent] (l3) at (l2-|3) {$\ell+3$};
        \node (l4) at (l3-|4) {$\ldots$};
        \node[agent] (l5) at (l4-|5) {$\ell+k$};
        \node[agent] (l6) at (l5-|6) {$\ell+k+1$};
        \node (l7) at (l6-|7) {$\ldots$};
        \node[agent] (l8) at (l7-|8) {$2\cdot\ell$};

        \node[above=of l1] (dl) {$\vdots$};
        \node (dr) at (dl-|l8) {$\vdots$};

        \node[above=of dl, agent] (m1) {$(m-1)\cdot \ell + 1$};
        \node[agent] (m2) at (m1-|l2) {$(m-1)\cdot \ell + 2$};
        \node[agent] (m3) at (m2-|l3) {$(m-1)\cdot \ell + 3$};
        \node (m4) at (m3-|l4) {$\ldots$};
        \node[agent] (m5) at (m4-|l5) {$(m-1)\cdot \ell + k$};
        \node[agent] (m6) at (m5-|l6) {$(m-1)\cdot \ell + k+1$};
        \node (m7) at (m6-|l7) {$\ldots$};
        \node[agent] (m8) at (m7-|l8) {$m\cdot \ell$};

        \draw[->] (1) to node[left] {$t$} (l1);
        \draw[->] (2) to node[left] {$t$} (l2);
        \draw[->] (3) to node[left] {$t$} (l3);
        \draw[->] (5) to node[left] {$t$} (l5);
        \draw[->] (6) to node[left] {$t$} (l6);
        \draw[->] (8) to node[left] {$t$} (l8);

        \draw[->] (l1) to node[left] {$t$} (dl);
        \draw[->] (l8) to node[left] {$t$} (dr);

        \draw[->] (dl) to node[left] {$t$} (m1);
        \draw[->] (dr) to node[left] {$t$} (m8);

        \path[draw,->] (m1.north) -- (m1.north) arc (0:180:1) -- node[left]
        {$t$} ($ (1.south)-(2,0) $) -- ($ (1.south)-(2,0) $) arc (180:360:1);

        \path[draw,->] (m2.north) -- (m2.north) arc (0:180:1) -- node[left]
        {$t$} ($ (2.south)-(2,0) $) -- ($ (2.south)-(2,0) $) arc (180:360:1);

        \path[draw,->] (m3.north) -- (m3.north) arc (0:180:1) -- node[left]
        {$t$} ($ (3.south)-(2,0) $) -- ($ (3.south)-(2,0) $) arc (180:360:1);

        \path[draw,->] (m5.north) -- (m5.north) arc (0:180:1) -- node[left]
        {$t$} ($ (5.south)-(2,0) $) -- ($ (5.south)-(2,0) $) arc (180:360:1);

        \path[draw,->] (m6.north) -- (m6.north) arc (180:0:1) -- node[right]
        {$t$} ($ (6.south)+(2,0) $) -- ($ (6.south)+(2,0) $) arc (360:180:1);

        \path[draw,->] (m8.north) -- (m8.north) arc (180:0:1) -- node[right]
        {$t$} ($ (8.south)+(2,0) $) -- ($ (8.south)+(2,0) $) arc (360:180:1);

        \begin{scope}[on background layer]
          \begin{scope}[transparency group]
            \begin{scope}[blend mode=multiply]
              \node [rectangle, fill=blue!20, fit=(1) (m1),
                     label={160:$\firstcol$}] {};
              \node [rectangle, fill=green!20, fit=(1) (8),
                     label={270:$\firstrow$}] {};
              \node [rectangle, fill=yellow!20, fit=(m1) (m8),
                     label={90:$\lastrow$}] {};
              \node [rectangle, fill=red!20, fit=(m8) (8),
                     label={10:$\lastcol$}] {};
            \end{scope}
          \end{scope}
        \end{scope}
      \end{tikzpicture}
    }
  \end{center}

  As a set of configurations specified by traps has to be closed under
  transitions, we also include all configurations where at least one agent has
  the state $q_\bot$. We now proceed to define the necessary trap languages and
  prove they ensure the desired structure.

  For convenience we will omit \enquote{or some agent has state $q_\bot$} when
  defining the properties of global configurations. We will also omit the slots
  where all letters have $\emptyset$. All the trap languages we present consist
  of normalized traps, because the transitions can either change $\badflag$ or
  move some agent into the state $q_\bot$; the former never changes whether the
  traps are marked, and the latter ensures that all the traps we present are
  marked.

  First we ensure that the agents have correct locations.
  This is ensured by the following trap languages:
  \begin{center}
    \begin{tabular}{cccc}
      $\letter{
        \textvisiblespace\\
        \set{q_l,q_\bot}\\
      }$&
      $\letter{
        \textvisiblespace\\
        \set{q_\bot}\\
      }^*$&
      \qquad &
      $\ali{
        \indexslot\\
        \loc\\
      }$
    \end{tabular}
  \end{center}
  \begin{center}
    \begin{tabular}{cccccc}
      $\letter{
        \textvisiblespace\\
        \set{q_\bot}\\
        \emptyset\\
      }$&
      $\letter{
        \textvisiblespace\\
        \set{q_\bot}\\
        \set{\uparrow}\\
      }^*$&
      $\letter{
        p_1\\
        \set{q_\bot}\\
        \emptyset\\
      }$&
      $\letter{
        \textvisiblespace\\
        \set{q_\bot}\\
        \set{\uparrow}\\
      }^*$&
      \qquad &
      $\ali{
        \indexslot\\
        \loc\\
        t.p_1\\
      }$
    \end{tabular}
  \end{center}
  It is easy to see that these trap languages require that the first agent
  is in the state $q_l$ while the remaining agents are iterating and not
  examining the agent with the index $0$, not themselves.
  (Or some agent is in the state $q_\bot$)

  For convenience, let us also exclude the case when the two relations
  coincide or are reverse of each other.
  This corresponds to the following trap languages.
  \begin{center}
    \begin{tabular}{ccccccc}
      $\letter{
        \textvisiblespace\\
        \set{q_\bot}\\
        \emptyset\\
      }$&
      $\letter{
        \textvisiblespace\\
        \set{q_\bot}\\
        \emptyset\\
      }^*$&
      $\letter{
        p_1\\
        \set{q_\bot}\\
        \emptyset\\
      }$&
      $\letter{
        \textvisiblespace\\
        \set{q_\bot}\\
        \set{\textvisiblespace}\\
      }$&
      $\letter{
        \textvisiblespace\\
        \set{q_\bot}\\
        \emptyset\\
      }^*$&
      \qquad &
      $\ali{
        \indexslot\\
        \loc\\
        t.p_1\\
      }$
    \end{tabular}
  \end{center}
  \begin{center}
    \begin{tabular}{ccccccc}
      $\letter{
        \textvisiblespace\\
        \set{q_\bot}\\
        \emptyset\\
      }$&
      $\letter{
        \textvisiblespace\\
        \set{q_\bot}\\
        \emptyset\\
      }^*$&
      $\letter{
        \textvisiblespace\\
        \set{q_\bot}\\
        \set{\textvisiblespace}\\
      }$&
      $\letter{
        p_1\\
        \set{q_\bot}\\
        \emptyset\\
      }$&
      $\letter{
        \textvisiblespace\\
        \set{q_\bot}\\
        \emptyset\\
      }^*$&
      \qquad &
      $\ali{
        \indexslot\\
        \loc\\
        t.p_1\\
      }$
    \end{tabular}
  \end{center}
  \begin{center}
    \begin{tabular}{cccccc}
      $\letter{
        \textvisiblespace\\
        \set{q_\bot}\\
        \emptyset\\
      }$&
      $\letter{
        \textvisiblespace\\
        \set{q_\bot}\\
        \set{\textvisiblespace}\\
      }$&
      $\letter{
        \textvisiblespace\\
        \set{q_\bot}\\
        \emptyset\\
      }^*$&
      $\letter{
        p_1\\
        \set{q_\bot}\\
        \emptyset\\
      }$&
      \qquad &
      $\ali{
        \indexslot\\
        \loc\\
        t.p_1\\
      }$
    \end{tabular}
  \end{center}
  \begin{center}
    \begin{tabular}{cccccc}
      $\letter{
        \textvisiblespace\\
        \set{q_\bot}\\
        \emptyset\\
      }$&
      $\letter{
        p_1\\
        \set{q_\bot}\\
        \emptyset\\
      }$&
      $\letter{
        \textvisiblespace\\
        \set{q_\bot}\\
        \emptyset\\
      }^*$&
      $\letter{
        \textvisiblespace\\
        \set{q_\bot}\\
        \set{\textvisiblespace}\\
      }$&
      \qquad &
      $\ali{
        \indexslot\\
        \loc\\
        t.p_1\\
      }$
    \end{tabular}
  \end{center}

  Next we ensure that, roughly speaking, $t(k+1)=t(k)+1$, with wraparound from
  $n$ to $1$. The general case is handled by the following two trap languages.
  \begin{center}
    \begin{tabular}{cccccccccc}
      $\letter{
        \textvisiblespace\\
        \set{q_\bot}\\
        \emptyset\\
        \emptyset\\
      }^*$&
      $\letter{
        p_1\\
        \set{q_\bot}\\
        \emptyset\\
        \emptyset\\
      }$&
      $\letter{
        p_2\\
        \set{q_\bot}\\
        \emptyset\\
        \emptyset\\
      }$&
      $\letter{
        \textvisiblespace\\
        \set{q_\bot}\\
        \emptyset\\
        \emptyset\\
      }^*$&
      $\letter{
        \textvisiblespace\\
        \set{q_\bot}\\
        \set{\textvisiblespace}\\
        \emptyset\\
      }$&
      $\letter{
        \textvisiblespace\\
        \set{q_\bot}\\
        \emptyset\\
        \set{\uparrow}\\
      }$&
      $\letter{
        \textvisiblespace\\
        \set{q_\bot}\\
        \emptyset\\
        \emptyset\\
      }^*$&
      \qquad &
      $\ali{
        \indexslot\\
        \loc\\
        t.p_1\\
        t.p_2\\
      }$
    \end{tabular}
  \end{center}
  \begin{center}
    \begin{tabular}{cccccccccc}
      $\letter{
        \textvisiblespace\\
        \set{q_\bot}\\
        \emptyset\\
        \emptyset\\
      }^*$&
      $\letter{
        \textvisiblespace\\
        \set{q_\bot}\\
        \set{\textvisiblespace}\\
        \emptyset\\
      }$&
      $\letter{
        \textvisiblespace\\
        \set{q_\bot}\\
        \emptyset\\
        \set{\uparrow}\\
      }$&
      $\letter{
        \textvisiblespace\\
        \set{q_\bot}\\
        \emptyset\\
        \emptyset\\
      }^*$&
      $\letter{
        p_1\\
        \set{q_\bot}\\
        \emptyset\\
        \emptyset\\
      }$&
      $\letter{
        p_2\\
        \set{q_\bot}\\
        \emptyset\\
        \emptyset\\
      }$&
      $\letter{
        \textvisiblespace\\
        \set{q_\bot}\\
        \emptyset\\
        \emptyset\\
      }^*$&
      \qquad &
      $\ali{
        \indexslot\\
        \loc\\
        t.p_1\\
        t.p_2\\
      }$
    \end{tabular}
  \end{center}
  We can interpret this language as follows. Given the agents with the indices
  $i$, $i+1$,  $j$, $j+1$, either the agent with the index $i$ is not
  inspecting the agent with the index $j$, or the agent with the index $i+1$ is
  inspecting the agent with the index $j+1$. In other words, if the agent
  number $i$ inspects the agent number $j$, then the agent number $i+1$
  inspects the agent number $j+1$ (assuming $i+1$ and $j+1$ do not exceed
  $n-1$).

  The two special cases correspond to $i$ or $j$ being equal to $n-1$. These
  are described by the following two trap languages.
  \begin{center}
    \begin{tabular}{cccccccccc}
      $\letter{
        \textvisiblespace\\
        \set{q_\bot}\\
        \emptyset\\
        \emptyset\\
      }$&
      $\letter{
        \textvisiblespace\\
        \set{q_\bot}\\
        \emptyset\\
        \set{\uparrow}\\
      }$&
      $\letter{
        \textvisiblespace\\
        \set{q_\bot}\\
        \emptyset\\
        \emptyset\\
      }^*$&
      $\letter{
        p_1\\
        \set{q_\bot}\\
        \emptyset\\
        \emptyset\\
      }$&
      $\letter{
        p_2\\
        \set{q_\bot}\\
        \emptyset\\
        \emptyset\\
      }$&
      $\letter{
        \textvisiblespace\\
        \set{q_\bot}\\
        \emptyset\\
        \emptyset\\
      }^*$&
      $\letter{
        \textvisiblespace\\
        \set{q_\bot}\\
        \set{\textvisiblespace}\\
        \emptyset\\
      }$&
      \qquad &
      $\ali{
        \indexslot\\
        \loc\\
        t.p_1\\
        t.p_2\\
      }$
    \end{tabular}
  \end{center}
  \begin{center}
    \begin{tabular}{cccccccccc}
      $\letter{
        \textvisiblespace\\
        \set{q_\bot}\\
        \emptyset\\
        \emptyset\\
      }$&
      $\letter{
        p_2\\
        \set{q_\bot}\\
        \emptyset\\
        \emptyset\\
      }$&
      $\letter{
        \textvisiblespace\\
        \set{q_\bot}\\
        \emptyset\\
        \emptyset\\
      }^*$&
      $\letter{
        \textvisiblespace\\
        \set{q_\bot}\\
        \set{\textvisiblespace}\\
        \emptyset\\
      }$&
      $\letter{
        \textvisiblespace\\
        \set{q_\bot}\\
        \emptyset\\
        \set{\uparrow}\\
      }$&
      $\letter{
        \textvisiblespace\\
        \set{q_\bot}\\
        \emptyset\\
        \emptyset\\
      }^*$&
      $\letter{
        p_1\\
        \set{q_\bot}\\
        \emptyset\\
        \emptyset\\
      }$&
      \qquad &
      $\ali{
        \indexslot\\
        \loc\\
        t.p_1\\
        t.p_2\\
      }$
    \end{tabular}
  \end{center}

  Given that no agent inspects itself or its neighbour, the trap languages
  defined by now ensure that inspection relation is just addition of a constant
  modulo $n-1$.

  We now proceed to define the trap languages that ensure that the edges of the
  grid are correctly marked. First we consider the first column. Informally
  speaking, the first colour contains the agents reachable from the first one
  by steps corresponding to the inspection relation. First let us ensure that
  all such agents have the $\firstcol$ variable set. The following language
  ensures that the agent with the index $1$ has this variable set.
  \begin{center}
    \begin{tabular}{ccccc}
      $\letter{
        \textvisiblespace\\
        \set{q_\bot}\\
        \emptyset\\
      }$&
      $\letter{
        \textvisiblespace\\
        \set{q_\bot}\\
        \set{\trueval}\\
      }$&
      $\letter{
        \textvisiblespace\\
        \set{q_\bot}\\
        \emptyset\\
      }^*$&
      \qquad &
      $\ali{
        \indexslot\\
        \loc\\
        \firstcol\\
      }$
    \end{tabular}
  \end{center}
  The following two languages ensure correct interactions with the inspection
  relation.
  \begin{center}
    \begin{tabular}{cccccccc}
      $\letter{
        \textvisiblespace\\
        \set{q_\bot}\\
        \emptyset\\
        \emptyset\\
      }$&
      $\letter{
        \textvisiblespace\\
        \set{q_\bot}\\
        \emptyset\\
        \emptyset\\
      }^*$&
      $\letter{
        p_1\\
        \set{q_\bot}\\
        \set{\falseval}\\
        \emptyset\\
      }$&
      $\letter{
        \textvisiblespace\\
        \set{q_\bot}\\
        \emptyset\\
        \emptyset\\
      }^*$&
      $\letter{
        \textvisiblespace\\
        \set{q_\bot}\\
        \set{\trueval}\\
        \set{\textvisiblespace}\\
      }$&
      $\letter{
        \textvisiblespace\\
        \set{q_\bot}\\
        \emptyset\\
        \emptyset\\
      }^*$&
      \qquad &
      $\ali{
        \indexslot\\
        \loc\\
        \firstcol\\
        t.p_1\\
      }$
    \end{tabular}
  \end{center}
  \begin{center}
    \begin{tabular}{cccccccc}
      $\letter{
        \textvisiblespace\\
        \set{q_\bot}\\
        \emptyset\\
        \emptyset\\
      }$&
      $\letter{
        \textvisiblespace\\
        \set{q_\bot}\\
        \emptyset\\
        \emptyset\\
      }^*$&
      $\letter{
        \textvisiblespace\\
        \set{q_\bot}\\
        \set{\trueval}\\
        \set{\textvisiblespace}\\
      }$&
      $\letter{
        \textvisiblespace\\
        \set{q_\bot}\\
        \emptyset\\
        \emptyset\\
      }^*$&
      $\letter{
        p_1\\
        \set{q_\bot}\\
        \set{\falseval}\\
        \emptyset\\
      }$&
      $\letter{
        \textvisiblespace\\
        \set{q_\bot}\\
        \emptyset\\
        \emptyset\\
      }^*$&
      \qquad &
      $\ali{
        \indexslot\\
        \loc\\
        \firstcol\\
        t.p_1\\
      }$
    \end{tabular}
  \end{center}
  What such traps ensure is: for any distinct indices $i$ and $j$,
  either the $i$-th agent does not have $\firstcol$ equal to $\trueval$,
  or it is not currently inspecting the $j$-th agent,
  or $j$-th agent has $\firstcol$ equal to $\trueval$.
  In other words, any agent inspected by an agent with $\firstcol$ set
  also has $\firstcol$ set.
  We now need to ensure that only the agents in the first column have
  the $\firstcol$ variable equal to $\trueval$.
  To do this, we demand that no agent between the agent with the index $1$
  and the agent inspected by the agent with the index $1$ has
  $\firstcol$ set.
  This is ensured by the following trap language.
  \begin{center}
    \begin{tabular}{ccccccccc}
      $\letter{
        \textvisiblespace\\
        \set{q_\bot}\\
        \emptyset\\
        \emptyset\\
      }$&
      $\letter{
        p_1\\
        \set{q_\bot}\\
        \emptyset\\
        \emptyset\\
      }$&
      $\letter{
        \textvisiblespace\\
        \set{q_\bot}\\
        \emptyset\\
        \emptyset\\
      }^*$&
      $\letter{
        \textvisiblespace\\
        \set{q_\bot}\\
        \set{\falseval}\\
        \emptyset\\
      }$&
      $\letter{
        \textvisiblespace\\
        \set{q_\bot}\\
        \emptyset\\
        \emptyset\\
      }^*$&
      $\letter{
        \textvisiblespace\\
        \set{q_\bot}\\
        \emptyset\\
        \set{\textvisiblespace}\\
      }$&
      $\letter{
        \textvisiblespace\\
        \set{q_\bot}\\
        \emptyset\\
        \emptyset\\
      }^*$&
      \qquad &
      $\ali{
        \indexslot\\
        \loc\\
        \firstcol\\
        t.p_1\\
      }$
    \end{tabular}
  \end{center}
  Indeed, using the conversion from disjunction to implication we can say
  that any agent between $1$ and $t(1)$ has $\firstcol$ unset.
  In particular, this implies that $t(1)-1$ divides $n-1$
  (otherwise the wraparound would create a violation).
  The last column can now be defined as the last agent plus all the agents
  immediately before the agents in the first column using the following
  trap languages.
  \begin{center}
    \begin{tabular}{cccc}
      $\letter{
        \textvisiblespace\\
        \set{q_\bot}\\
        \emptyset\\
      }^*$&
      $\letter{
        \textvisiblespace\\
        \set{q_\bot}\\
        \set{\trueval}\\
      }$&
      \qquad &
      $\ali{
        \indexslot\\
        \loc\\
        \lastcol\\
      }$
    \end{tabular}
  \end{center}
  \begin{center}
    \begin{tabular}{cccccc}
      $\letter{
        \textvisiblespace\\
        \set{q_\bot}\\
        \emptyset\\
        \emptyset\\
      }^*$&
      $\letter{
        \textvisiblespace\\
        \set{q_\bot}\\
        \emptyset\\
        \set{\trueval}\\
      }$&
      $\letter{
        \textvisiblespace\\
        \set{q_\bot}\\
        \set{\falseval}\\
        \emptyset\\
      }$&
      $\letter{
        \textvisiblespace\\
        \set{q_\bot}\\
        \emptyset\\
        \emptyset\\
      }^*$&
      \qquad &
      $\ali{
        \indexslot\\
        \loc\\
        \firstcol\\
        \lastcol\\
      }$
    \end{tabular}
  \end{center}
  \begin{center}
    \begin{tabular}{cccccc}
      $\letter{
        \textvisiblespace\\
        \set{q_\bot}\\
        \emptyset\\
        \emptyset\\
      }^*$&
      $\letter{
        \textvisiblespace\\
        \set{q_\bot}\\
        \emptyset\\
        \set{\falseval}\\
      }$&
      $\letter{
        \textvisiblespace\\
        \set{q_\bot}\\
        \set{\trueval}\\
        \emptyset\\
      }$&
      $\letter{
        \textvisiblespace\\
        \set{q_\bot}\\
        \emptyset\\
        \emptyset\\
      }^*$&
      \qquad &
      $\ali{
        \indexslot\\
        \loc\\
        \firstcol\\
        \lastcol\\
      }$
    \end{tabular}
  \end{center}
  The latter two languages say that an agent in the middle has $\lastcol$
  if and only if the next agent has $\firstcol$ set.
  The first row and the last row can now be defined as all the agents
  between the first agent inclusive and its inspectee non-inclusive,
  and between the last agent's inspector non-inclusive and the last agent
  inclusive.
  This corresponds to the following four trap languages.
  \begin{center}
    \begin{tabular}{ccccccccc}
      $\letter{
        \textvisiblespace\\
        \set{q_\bot}\\
        \emptyset\\
        \emptyset\\
      }$&
      $\letter{
        p_1\\
        \set{q_\bot}\\
        \emptyset\\
        \emptyset\\
      }$&
      $\letter{
        \textvisiblespace\\
        \set{q_\bot}\\
        \emptyset\\
        \emptyset\\
      }^*$&
      $\letter{
        \textvisiblespace\\
        \set{q_\bot}\\
        \set{\trueval}\\
        \emptyset\\
      }$&
      $\letter{
        \textvisiblespace\\
        \set{q_\bot}\\
        \emptyset\\
        \emptyset\\
      }^*$&
      $\letter{
        \textvisiblespace\\
        \set{q_\bot}\\
        \emptyset\\
        \set{\textvisiblespace}\\
      }$&
      $\letter{
        \textvisiblespace\\
        \set{q_\bot}\\
        \emptyset\\
        \emptyset\\
      }^*$&
      \qquad &
      $\ali{
        \indexslot\\
        \loc\\
        \firstcol\\
        t.p_1\\
      }$
    \end{tabular}
  \end{center}
  \begin{center}
    \begin{tabular}{ccccccccc}
      $\letter{
        \textvisiblespace\\
        \set{q_\bot}\\
        \emptyset\\
        \emptyset\\
      }$&
      $\letter{
        p_1\\
        \set{q_\bot}\\
        \emptyset\\
        \emptyset\\
      }$&
      $\letter{
        \textvisiblespace\\
        \set{q_\bot}\\
        \emptyset\\
        \emptyset\\
      }^*$&
      $\letter{
        \textvisiblespace\\
        \set{q_\bot}\\
        \emptyset\\
        \set{\textvisiblespace}\\
      }$&
      $\letter{
        \textvisiblespace\\
        \set{q_\bot}\\
        \emptyset\\
        \emptyset\\
      }^*$&
      $\letter{
        \textvisiblespace\\
        \set{q_\bot}\\
        \set{\falseval}\\
        \emptyset\\
      }$&
      $\letter{
        \textvisiblespace\\
        \set{q_\bot}\\
        \emptyset\\
        \emptyset\\
      }^*$&
      \qquad &
      $\ali{
        \indexslot\\
        \loc\\
        \firstcol\\
        t.p_1\\
      }$
    \end{tabular}
  \end{center}
  \begin{center}
    \begin{tabular}{ccccccccc}
      $\letter{
        \textvisiblespace\\
        \set{q_\bot}\\
        \emptyset\\
        \emptyset\\
      }$&
      $\letter{
        \textvisiblespace\\
        \set{q_\bot}\\
        \emptyset\\
        \emptyset\\
      }^*$&
      $\letter{
              p_1\\
        \set{q_\bot}\\
        \emptyset\\
        \emptyset\\
      }$&
      $\letter{
        \textvisiblespace\\
        \set{q_\bot}\\
        \emptyset\\
        \emptyset\\
      }^*$&
      $\letter{
        \textvisiblespace\\
        \set{q_\bot}\\
        \set{\trueval}\\
        \emptyset\\
      }$&
      $\letter{
        \textvisiblespace\\
        \set{q_\bot}\\
        \emptyset\\
        \emptyset\\
      }^*$&
      $\letter{
              \textvisiblespace\\
        \set{q_\bot}\\
        \emptyset\\
            \set{\textvisiblespace}\\
      }$&
      \qquad &
      $\ali{
        \indexslot\\
        \loc\\
        \lastcol\\
        t.p_1\\
      }$
    \end{tabular}
  \end{center}
  \begin{center}
    \begin{tabular}{ccccccccc}
      $\letter{
        \textvisiblespace\\
        \set{q_\bot}\\
        \emptyset\\
        \emptyset\\
      }$&
      $\letter{
        \textvisiblespace\\
        \set{q_\bot}\\
        \emptyset\\
        \emptyset\\
      }^*$&
      $\letter{
        \textvisiblespace\\
        \set{q_\bot}\\
        \set{\falseval}\\
        \emptyset\\
      }$&
      $\letter{
        \textvisiblespace\\
        \set{q_\bot}\\
        \emptyset\\
        \emptyset\\
      }^*$&
      $\letter{
              p_1\\
        \set{q_\bot}\\
        \emptyset\\
        \emptyset\\
      }$&
      $\letter{
        \textvisiblespace\\
        \set{q_\bot}\\
        \emptyset\\
        \emptyset\\
      }^*$&
      $\letter{
              \textvisiblespace\\
        \set{q_\bot}\\
        \emptyset\\
            \set{\textvisiblespace}\\
      }$&
      \qquad &
      $\ali{
        \indexslot\\
        \loc\\
        \lastcol\\
        t.p_1\\
      }$
    \end{tabular}
  \end{center}

  These languages and observations suffice to enforce a grid in the desired
  structure. It remains to assert the compatibility of the placed Wang-tiles.
  For this we write $\overline{\set{\tau}}$ to denote the set $\mathbb{T}
  \setminus \set{\tau}$ and $E(\tau) = \set{\tau' \in \mathbb{T}\mid \tau(E) =
  \tau'(W)}$; i.e., $E(\tau)$ is the set of tiles that can be placed east of
  $\tau$ (similarly for $W(\tau)$, $N(\tau)$ and $S(\tau)$).
  In the following, consider for the following definitions of languages $\tau$
  a parameter and the languages unions for all $\tau \in \mathbb{T}$.
  We start with horizontal compatibility. For two adjacent indices $i$ and
  $i+1$ where agent $i$ does not claim to be on the last column this is
  straightforward:
  \begin{center}
    \begin{tabular}{ccccccc}
      $\letter{
        \emptyset\\
        \set{q_\bot}\\
        \emptyset\\
        \emptyset\\
        \emptyset\\
        \emptyset\\
      }$ &
      $\letter{
        \emptyset\\
        \set{q_\bot}\\
        \emptyset\\
        \emptyset\\
        \emptyset\\
        \emptyset\\
      }^{*}$ &
      $\letter{
        \emptyset\\
        \set{q_\bot}\\
        \emptyset\\
        \emptyset\\
        \set{\trueval}\\
        \overline{\set{\tau}}\\
      }$ &
      $\letter{
        \emptyset\\
        \set{q_\bot}\\
        \emptyset\\
        \emptyset\\
        \emptyset\\
        E(\tau)\\
      }$ &
      $\letter{
        \emptyset\\
        \set{q_\bot}\\
        \emptyset\\
        \emptyset\\
        \emptyset\\
        \emptyset\\
      }^{*}$ &
      \qquad\qquad &
      $\ali{
        \loc \\
        \firstrow\\
        \firstcol\\
        \lastrow\\
        \lastcol\\
        \tile\\
      }$
    \end{tabular}
  \end{center}
  Indeed, what such traps ensure is that whenever the agent with the index $i$
  is not in the last column and it has tile $\tau$, the next agent has one of
  the tiles from $E(\tau)$.
  We only check compatibility from left to right, which is sufficient as long
  as we check each pair of adjacent tiles at least in one direction.

  For the agents from the last column, though, the next agent to the right
  is actually next agent after the corresponding inspector, or the inspector
  of the agent with the next index.
  These two definitions coincide when both are applicable, but in the first row
  and in the last row only one of the two works.
  We thus use the following trap languages.
  \begin{center}
    \begin{tabular}{ccccccccc}
      $\letter{
              \textvisiblespace\\
        \set{q_\bot}\\
        \emptyset\\
        \emptyset\\
        \emptyset\\
        \emptyset\\
        \emptyset\\
        \emptyset\\
      }$ &
      $\letter{
              \textvisiblespace\\
        \set{q_\bot}\\
        \emptyset\\
        \emptyset\\
        \emptyset\\
        \emptyset\\
        \emptyset\\
        \emptyset\\
      }^{*}$ &
      $\letter{
              p_1\\
        \set{q_\bot}\\
        \emptyset\\
        \emptyset\\
        \emptyset\\
        \emptyset\\
        \emptyset\\
        \emptyset\\
      }$ &
      $\letter{
              \textvisiblespace\\
        \set{q_\bot}\\
        \emptyset\\
        \emptyset\\
        \emptyset\\
        \emptyset\\
            E(\tau)\\
        \emptyset\\
      }$ &
      $\letter{
              \textvisiblespace\\
        \set{q_\bot}\\
        \emptyset\\
        \emptyset\\
        \emptyset\\
        \emptyset\\
        \emptyset\\
        \emptyset\\
      }^{*}$ &
      $\letter{
              \textvisiblespace\\
        \set{q_\bot}\\
        \emptyset\\
        \emptyset\\
        \emptyset\\
            \set{\falseval}\\
            \overline{\set{\tau}}\\
        \set{\textvisiblespace}\\
      }$ &
      $\letter{
              \textvisiblespace\\
        \set{q_\bot}\\
        \emptyset\\
        \emptyset\\
        \emptyset\\
        \emptyset\\
        \emptyset\\
        \emptyset\\
      }^{*}$ &
      \qquad &
      $\ali{
              index\\
        \loc \\
        \firstrow\\
        \firstcol\\
        \lastrow\\
        \lastcol\\
        \tile\\
        t.p_1\\
      }$
    \end{tabular}
  \end{center}
  \begin{center}
    \begin{tabular}{ccccccccc}
      $\letter{
              \textvisiblespace\\
        \set{q_\bot}\\
        \emptyset\\
        \emptyset\\
        \emptyset\\
        \emptyset\\
        \emptyset\\
      }$ &
      $\letter{
              \textvisiblespace\\
        \set{q_\bot}\\
        \emptyset\\
        \emptyset\\
        \emptyset\\
        \emptyset\\
        \emptyset\\
        \emptyset\\
      }^{*}$ &
      $\letter{
              p_1\\
        \set{q_\bot}\\
        \emptyset\\
        \emptyset\\
        \emptyset\\
        \emptyset\\
            E(\tau)\\
        \emptyset\\
      }$ &
      $\letter{
              \textvisiblespace\\
        \set{q_\bot}\\
        \emptyset\\
        \emptyset\\
        \emptyset\\
        \emptyset\\
        \emptyset\\
        \emptyset\\
      }^{*}$ &
      $\letter{
              \textvisiblespace\\
        \set{q_\bot}\\
        \emptyset\\
        \emptyset\\
        \emptyset\\
            \set{\falseval}\\
            \overline{\set{\tau}}\\
        \emptyset\\
      }$ &
      $\letter{
              \textvisiblespace\\
        \set{q_\bot}\\
        \emptyset\\
        \emptyset\\
        \emptyset\\
        \emptyset\\
        \emptyset\\
        \set{\textvisiblespace}\\
      }$ &
      $\letter{
              \textvisiblespace\\
        \set{q_\bot}\\
        \emptyset\\
        \emptyset\\
        \emptyset\\
        \emptyset\\
        \emptyset\\
        \emptyset\\
      }^{*}$ &
      \qquad &
      $\ali{
              index\\
        \loc \\
        \firstrow\\
        \firstcol\\
        \lastrow\\
        \lastcol\\
        \tile\\
        t.p_1\\
      }$
    \end{tabular}
  \end{center}
  For vertical compatibility we do a very similar construction.
  \begin{center}
    \begin{tabular}{cccccccc}
      $\letter{
        \textvisiblespace\\
            \set{q_\bot}\\
        \emptyset\\
        \emptyset\\
      }$ &
      $\letter{
        \textvisiblespace\\
            \set{q_\bot}\\
        \emptyset\\
        \emptyset\\
      }^{*}$ &
      $\letter{
        p_{1}\\
            \set{q_\bot}\\
        \overline{\set{\tau}}\\
        \emptyset\\
      }$ &
      $\letter{
        \textvisiblespace\\
            \set{q_\bot}\\
        \emptyset\\
        \emptyset\\
      }^{*}$ &
      $\letter{
        \textvisiblespace\\
            \set{q_\bot}\\
        N(\tau)\\
            \set{\textvisiblespace}\\
      }$ &
      $\letter{
        \textvisiblespace\\
            \set{q_\bot}\\
        \emptyset\\
        \emptyset\\
      }^{*}$ &
      \qquad\qquad &
      $\ali{
        \indexslot\\
        \loc \\
        \tile\\
        t.p_{1}\\
      }$
    \end{tabular}
  \end{center}
  \begin{center}
    \begin{tabular}{cccccccc}
      $\letter{
        \textvisiblespace\\
            \set{q_\bot}\\
        \emptyset\\
        \emptyset\\
      }$ &
      $\letter{
        \textvisiblespace\\
            \set{q_\bot}\\
        \emptyset\\
        \emptyset\\
      }^{*}$ &
      $\letter{
        \textvisiblespace\\
            \set{q_\bot}\\
        N(\tau)\\
            \set{\textvisiblespace}\\
      }$ &
      $\letter{
        \textvisiblespace\\
            \set{q_\bot}\\
        \emptyset\\
        \emptyset\\
      }^{*}$ &
      $\letter{
        p_{1}\\
            \set{q_\bot}\\
        \overline{\set{\tau}}\\
        \emptyset\\
      }$ &
      $\letter{
        \textvisiblespace\\
            \set{q_\bot}\\
        \emptyset\\
        \emptyset\\
      }^{*}$ &
      \qquad\qquad &
      $\ali{
        \indexslot\\
        \loc \\
        \tile\\
        t.p_{1}\\
      }$
    \end{tabular}
  \end{center}

  To summarize our construction so far, we have presented some trap languages
  such that each configuration intersecting with all the traps
  of all the presented
  languages either has an agent in the state $q_\bot$,
  or has one agent in the state $q_l$ and the remaining agents encoding a
  periodic tiling.

  We now define the safety condition $\psi$ to require that either there is at
  least one agent in the state $q_\bot$, or all agents have
  $\badflag=\falseval$. We can also declare the initial state to have all
  agents in the state $q_\bot$. Now the trap languages provide for each $n$ a
  set of initially marked traps with the following property. If there is no
  tiling, these traps ensure that some agent is in the state $q_\bot$, which is
  impossible to leave; thus these traps ensure that the safety condition $\psi$
  is inductive. If there is a tiling, however, we can use it to build a
  configuration where no agent is in the state $q_\bot$, one agent is in the
  state $q_l$, and all agents have $\badflag=\falseval$. This is a safe
  configuration that can however reach an unsafe configuration by using the
  local transition. Thus the question whether some trap languages ensure
  safety, or at least inductiveness of safety, is at least as hard as the
  existence of a periodic tiling using given Wang tiles, and thus undecidable.
\end{proof}

\section{$k$-rendezvousing systems}
\label{app:k-rendezvousing-systems}
On closer inspection of the proof of Theorem~\ref{thm:main} one can see that
the argument mainly relies on the possibility to identify an agent $m$ which is
not involved in the transition and remove it. Pointer variables might cause
transitions to interact with more agents; namely, those agents pointer
variables currently point to. However, since there are only finitely many
pointer variables, the amount of agents a transition might interact with is
still finite. In the following we expand the definitions of
$\move_{m}^{\transitions[\leftarrow]}$ and
$\move_{m}^{\transitions[\rightarrow]}$ to allow pointer slots in the set
$\transitions$. Then, we formalize this in the notion of
\emph{$k$-rendezvousing systems}. Essentially, this means that any transition
only relies on the state of $k$ agents. Formally, we say
\begin{definition}
  We call any parameterized system $\system$ with loop transitions
  $\transitions_{\iterating}$ and pointer variables $\ptrs$
  $k$-rendezvousing if for every instance of any
  transition $\configuration \vdash \configuration'$ there are $k$ indices
  $\set{i_{1}, \ldots, i_{k}}$ such that
  \begin{itemize}
    \item $\configuration_{\ell} = \configuration'_{\ell}$ for all $\ell \in
      [\size{\configuration}] \notin \set{i_{1}, \ldots, i_{k}}$, and
    \item $(\drop_{m} \circ
      \move^{\transitions_{\iterating}\cup\ptrs[\leftarrow]}_{m})
      (\configuration) \vdash
      \move^{\transitions_{\iterating}\cup\ptrs[\leftarrow]}_{m})
      (\configuration')$ is an instance of a transition for every $m \in
      [\size{\configuration}] \setminus \set{0, i_{1}, \ldots, i_{k}}$, and
    \item $(\drop_{m} \circ
      \move^{\transitions_{\iterating}\cup\ptrs[\rightarrow]}_{m})
      (\configuration) \vdash
      \move^{\transitions_{\iterating}\cup\ptrs[\rightarrow]}_{m})
      (\configuration')$ is an instance of a transition for every $m \in
      [\size{\configuration}] \setminus \set{\size{\configuration}-1, i_{1},
      \ldots, i_{k}}$.
  \end{itemize}
\end{definition}

With this definition we can give a straightforward generalization of
Theorem~\ref{thm:main}:
\begin{theorem}
  Let $\system$ be a $k$-rendezvousing parameterized system. Let $\ntrap_{0}
  \ldots \ntrap_{n-1} \in \ntrapalphabet^*$ be a normalized trap of the
  instance of $\system$ with $n$ agents. If $\ntrap_i(\indexslot) =
  \textvisiblespace$ and $\ntrap_{i} = \ntrap_{i+1} = \ldots = \ntrap_{i + k -
  1}$, then for every $\ell \geq 1$ the word
  \begin{equation*}
    \ntrap_{0} \ldots \ntrap_{i-1} \, \ntrap_{i}^{\ell_{0}}  \, \ntrap_{i+1}
    \ldots \ntrap_{n-1}
  \end{equation*}
  is a normalized trap of the instance with $n + \ell - 1$ agents.
\end{theorem}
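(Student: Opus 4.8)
The plan is to replay the proof of Theorem~\ref{thm:main} essentially verbatim, replacing its explicit case distinction over the transition types by a single appeal to the $k$-rendezvousing property. Since inserting copies of the \emph{unnamed} letter $\ntrap_i$ neither removes a name nor a pointer mark, the structural conditions of Definition~\ref{def:norm} are preserved, so only the trap condition is at issue. I argue by minimal counterexample: assuming the claim fails, fix the least $\ell_0 \geq 1$ for which $\ntrap_{0} \ldots \ntrap_{i-1} \, \ntrap_i^{\ell_0} \, \ntrap_{i+1} \ldots \ntrap_{n-1}$ denotes a powerword $\trap$ that is \emph{not} a trap of the instance with $n + \ell_0 - 1$ agents. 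The hypothesis $\ntrap_i = \ntrap_{i+1} = \ldots = \ntrap_{i+k-1}$ rules out $\ell_0 = 1$, hence $\ell_0 \geq 2$, and by minimality the word with exponent $\ell_0 - 1$ denotes a genuine trap $\trap'$ of the instance with $n + \ell_0 - 2$ agents. As every block letter equals $\ntrap_i$ and $\ntrap_i(\indexslot) = \textvisiblespace$, deleting any single block position turns $\trap$ into $\trap'$, i.e.\ $\trap' = \drop_m(\trap)$ for every index $m$ in the block. Finally fix witnesses $C \vdash C'$ with $C \sqcap \trap$ but $C' \notsqcap \trap$.

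The combinatorial core is to find one block index $m$ that the step $C \vdash C'$ leaves untouched and that admits a trap-preserving shift. After pumping, the positions $i, i+1, \ldots, i + \ell_0 + k - 2$ all carry the letter $\ntrap_i$, a block of $\ell_0 + k - 1 \geq k + 1$ consecutive equal positions. The $k$-rendezvousing property yields at most $k$ indices $\set{i_1, \ldots, i_k}$ outside of which $C$ and $C'$ coincide, so by pigeonhole at least one block index $m$ avoids $\set{i_1, \ldots, i_k}$; as the block has length at least two, this $m$ has a neighbour lying in the block. I then choose the shift direction accordingly: if $m - 1$ is in the block I use $\move^{\transitions_{\iterating} \cup \ptrs[\leftarrow]}_{m}$ (so $m \geq 1$ and the left shift is defined), and otherwise $m$ is the left end of the block, $m+1$ is in the block, and $\move^{\transitions_{\iterating} \cup \ptrs[\rightarrow]}_{m}$ applies (so $m \leq n - 2$). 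Either way the chosen neighbour has $\trap_{m-1} = \trap_m$ (resp.\ $\trap_{m+1} = \trap_m$), which is precisely what makes reassigning to that neighbour the pointers currently aimed at $m$ immaterial for the intersection; consequently the shifted configurations still satisfy $\sqcap \trap$ and $\notsqcap \trap$ respectively.

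It remains to collapse the instance and derive the contradiction. By the definition of $k$-rendezvousing, dropping the now pointer-free index $m$ produces a genuine transition $(\drop_{m} \circ \move^{\transitions_{\iterating} \cup \ptrs[\leftarrow]}_{m})(C) \vdash (\drop_{m} \circ \move^{\transitions_{\iterating} \cup \ptrs[\leftarrow]}_{m})(C')$ of the instance with $n + \ell_0 - 2$ agents (symmetrically for the rightward shift). I then invoke the two observations about $\drop_m$ recorded just before Theorem~\ref{thm:main}: their hypotheses hold at $m$ because $m$ is untouched, so $C_m = C'_m$ together with $C' \notsqcap \trap$ gives $C_m(\slot) \notin \trap_m(\slot)$ for every slot, while $\ntrap_m(\indexslot) = \textvisiblespace$ makes $\trap$ empty on every slot $m.t$. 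Since $\trap' = \drop_m(\trap)$, this yields $(\drop_m \circ \move^{\transitions_{\iterating} \cup \ptrs[\leftarrow]}_{m})(C) \sqcap \trap'$ but $(\drop_m \circ \move^{\transitions_{\iterating} \cup \ptrs[\leftarrow]}_{m})(C') \notsqcap \trap'$, contradicting that $\trap'$ is a trap. The main obstacle is the bookkeeping of this reduction rather than any new idea: one must check that a single extra repetition genuinely yields $k+1$ equal letters and that at the chosen $m$ a valid, trap-preserving shift direction always exists, so that the preconditions of the $\drop_m$-observations are met. The only genuine difference from Theorem~\ref{thm:main} is that a transition may now touch up to $k$ agents, which is exactly absorbed by demanding $k$ equal letters in place of three.
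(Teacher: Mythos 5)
Your proposal is correct and follows essentially the same route as the paper's own proof: a minimal-counterexample argument on $\ell_0$, a pigeonhole selection of an untouched block index $m$ (made possible because pumping yields $\ell_0+k-1\geq k+1$ equal letters), a direction-dependent $\move$ followed by $\drop_m$ justified by the $k$-rendezvousing definition, and the final contradiction with $\trap'=\drop_m(\trap)$ being a trap. Your bookkeeping of the block length and of the choice of shift direction is in fact slightly more explicit than the paper's, but the argument is the same.
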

\begin{proof}
  For the sake of contradiction we assume the statement of the theorem to be
  incorrect. Then we can fix a minimal $\ell_{0} \geq 1$ for which it does not
  hold true anymore. Let $\trap$ be the instance of $\ntrap_{0} \ldots
  \ntrap_{i-1}  \, \ntrap_{i}^{\ell_{0}}  \, \ntrap_{i+1} \, \ldots
  \ntrap_{n-1}$. If $\ell_{0} = 1$ then this already contradicts with the
  assumption of the theorem. Thus, consider the case that $\ell_{0} > 1$. Then,
  however, we have $\trap_{i} = \trap_{i+1} = \ldots = \trap_{i+\ell_{0}+k}$.
  Consequently, there exists $m \in \set{i, \ldots, i+\ell_{0}+k} \setminus
  \set{i_{1}, \ldots, i_{k}}$. Consider the instance $\trap'$ of $\ntrap_{0}
  \ldots \ntrap_{i-1}  \, \ntrap_i^{(\ell_{0}-1)}  \, \ntrap_{i+1} \, \ldots
  \ntrap_{n-1}$ which -- by minimality of $\ell_{0}$ -- is indeed a trap and
  $\drop_{m}(\trap) = \trap'$. By the choice of $m$ there is an adjacent index
  $m'$ of $m$ such that $m' \in \set{i, \ldots, i+\ell_{0}+k}$. W.l.o.g. $m' =
  m-1$ (otherwise exchange
  $\move^{\transitions_{\iterating}\cup\ptrs[\leftarrow]}_{m}$ with
  $\move^{\transitions_{\iterating}\cup\ptrs[\rightarrow]}_{m}$ in the
  following). Since $\trap_{m} = \trap_{m'}$ we have $\configuration \sqcap
  \trap$ if and only if
  $\move^{\transitions_{\iterating}[\leftarrow]\cup\ptrs}_{m}(\configuration)
  \sqcap \trap$ and $\configuration' \sqcap \trap$ if and only if
  $\move^{\transitions_{\iterating}\cup\ptrs[\leftarrow]}_{m}(\configuration')
  \sqcap \trap$. Since $\system$ is $k$-rendezvousing and by choice of $m$ we
  know $(\drop_{m} \circ
  \move^{\transitions_{\iterating}\cup\ptrs[\leftarrow]}_{m})
  (\configuration) \vdash (\drop_{m} \circ
  \move^{\transitions_{\iterating}\cup\ptrs[\leftarrow]}_{m})
  (\configuration')$. Since $\trap$ is not a trap we know $\configuration
  \sqcap \trap$ while $\configuration' \notsqcap \trap$. By our observation
  above this means
  $\move^{\transitions_{\iterating}\cup\ptrs[\leftarrow]}_{m}(\configuration)
  \sqcap \trap$ but
  $\move^{\transitions_{\iterating}\cup\ptrs[\leftarrow]}_{m}(\configuration')
  \notsqcap \trap$. This implies, since $\configuration_{m} =
  \configuration'_{m}$ that $\configuration_{m}(\slot) \notin \trap_{m}(\slot)$
  for all slots $\slot$. Now, we can see that $\trap'$ cannot be a trap since
  $\drop_{m}(\trap) = \trap'$ and, thus, $(\drop_{m} \circ
  \move^{\transitions_{\iterating}\cup\ptrs[\leftarrow]}_{m})(\configuration)
  \sqcap \trap'$ but $(\drop_{m} \circ
  \move^{\transitions_{\iterating}\cup\ptrs[\leftarrow]}_{m})(\configuration')
  \notsqcap \trap'$. This renders $\ell_{0}$ not minimal in contradiction to
  the assumption. The statement of the theorem follows.
\end{proof}

\end{document}